\theoremstyle{definition}
\theoremstyle{remark}
\newtheorem{remark}{Remark}
\theoremstyle{plain}
\newtheorem{theorem}{Theorem}
\newtheorem{lemma}{Lemma}
\newtheorem{proposition}{Proposition}
\newtheorem{corollary}{Corollary}
\begin{document}
\title{Downlink and Uplink Intelligent Reflecting Surface Aided Networks: NOMA and OMA}

\author{Yanyu~Cheng,~\IEEEmembership{Member,~IEEE},
        Kwok~Hung~Li,~\IEEEmembership{Senior~Member,~IEEE},
        Yuanwei~Liu,~\IEEEmembership{Senior~Member,~IEEE},
        Kah~Chan~Teh,~\IEEEmembership{Senior~Member,~IEEE},
        and H.~Vincent~Poor,~\IEEEmembership{Life~Fellow,~IEEE}

\thanks{
This article was presented in part at the IEEE Global Communications Conference (GLOBECOM) 2020 \cite{cheng2020IRS}.

Y. Cheng, K. H. Li, and K. C. Teh are with the School of Electrical and Electronic Engineering, Nanyang Technological University, Singapore 639798 (e-mail: ycheng022@e.ntu.edu.sg; ekhli@ntu.edu.sg; ekcteh@ntu.edu.sg).

Y. Liu is with the School of Electronic Engineering and Computer Science, Queen Mary University of London, London E1 4NS, UK (e-mail: yuanwei.liu@qmul.ac.uk).

H. V. Poor is with the Department of Electrical Engineering, Princeton University, Princeton, NJ 08544, USA (e-mail: poor@princeton.edu).
}
}

\maketitle

\begin{abstract}
Intelligent reflecting surfaces (IRSs) are envisioned to provide reconfigurable wireless environments for future communication networks.
In this paper, both downlink and uplink IRS-aided non-orthogonal multiple access (NOMA) and orthogonal multiple access (OMA) networks are studied, in which an IRS is deployed to enhance the coverage by assisting a cell-edge user device (UD) to communicate with the base station (BS).
To characterize system performance, new channel statistics of the BS-IRS-UD link with Nakagami-$m$ fading are investigated.
For each scenario, the closed-form expressions for the outage probability and ergodic rate are derived.
To gain further insight, the diversity order and high signal-to-noise ratio (SNR) slope for each scenario are obtained according to asymptotic approximations in the high-SNR regime.
It is demonstrated that the diversity order is affected by the number of IRS reflecting elements and Nakagami fading parameters, but the high-SNR slope is not related to these parameters.
Simulation results validate our analysis and reveal the superiority of the IRS over the full-duplex decode-and-forward relay.
\end{abstract}

\begin{IEEEkeywords}
Intelligent reflecting surface, non-orthogonal multiple access, orthogonal multiple access.
\end{IEEEkeywords}

\IEEEpeerreviewmaketitle

\section{Introduction}
\IEEEPARstart{R}{ecently}, intelligent reflecting surfaces (IRSs) have been proposed as a cost-effective solution to enhance the spectral and energy efficiency of future wireless communication networks \cite{cui2019secure}.
Specifically, an IRS consists of a large number of reconfigurable passive elements, and each element can induce a change of amplitude and phase for the incident signal \cite{huang2019reconfigurable,shen2019secrecy}.
By appropriately adjusting amplitude-reflection coefficients and phase-shift variables, it can improve the link quality and enhance the coverage significantly \cite{zhou2020framework,dong2020secure,feng2020deep}.
Compared with conventional communication assisting techniques such as relays, IRSs consume less energy due to passive reflection and can operate in full-duplex (FD) mode without self-interference \cite{qingqing2019towards}.

On the other hand, non-orthogonal multiple access (NOMA) has been proposed as a candidate technique for next-generation wireless communication networks as it can improve the spectral efficiency by allocating multiple user devices (UDs) to a single resource block \cite{liu2018multiple,cheng2019performance}.
It has been demonstrated that NOMA outperforms the conventional orthogonal multiple access (OMA) from the aspects of spectral efficiency, connection density, and user fairness \cite{men2016performance}.
Since allocating many UDs to a single carrier is not practical as it leads to high computational complexity \cite{cheng2020two}, UD pairing is a practical solution that can strike a balance between the performance and the computational complexity \cite{liu2017enhancing,ding2016impact,liu2016cooperative}.

\subsection{Related Work}
\textit{IRS-aided networks:} IRSs have attracted intensive research interest in both academia and industry \cite{shi2019enhanced,ozdogan2019intelligent}.
Accurate channel estimation is crucial for the application of IRSs.
For multi-user multiple-input single-output (MISO) networks assisted by IRSs, a channel estimation protocol was designed in \cite{mishra2019channel}.
To shorten the training sequences, a channel estimation framework was proposed for downlink IRS-aided multiple-input multiple-output (MIMO) networks in \cite{he2019cascaded}.
In \cite{liu2020matrix}, a message-passing based algorithm was designed for uplink IRS-aided MIMO networks.
In \cite{wang2019channel}, a three-phase pilot-based channel estimation scheme was proposed for MISO networks assisted by IRSs to reduce the channel estimation time.
In \cite{bjornson2019intelligent}, the superiority of an IRS was demonstrated as compared with a half-duplex (HD) decode-and-forward (DF) relay.
In \cite{lyu2020spatial}, the performance of a system aided by multiple IRSs was evaluated, and it was revealed that the IRS-aided system outperforms the FD DF relay (FDR)-aided system when the number of IRSs exceeds a certain value.
In \cite{zhang2019analysis}, multiple IRSs were deployed in a single-input single-output (SISO) system, and the phase shifts of IRSs were optimized by minimizing the outage probability (OP).
In \cite{yu2019miso} and \cite{wu2019intelligent}, active and passive beamforming was jointly optimized for IRS-aided MISO networks with different objectives.
The aforementioned works assume that IRSs have continuous phase shifts. On the other hand, discrete phase shifts have attracted considerable attention due to low complexity.
In \cite{xu2019discrete}, the performance of an IRS-aided SISO system with discrete phase shifts was evaluated from the data rate perspective.
In \cite{guo2019weighted}, a MISO IRS-aided system was considered, and the active beamforming at the base station (BS) and the passive beamforming at the IRS were jointly optimized to maximize the weighted sum rate.
In \cite{you2020channel}, the uplink transmission for an IRS-aided system was studied, and the channel estimation and passive beamforming problems were investigated.
In \cite{abeywickrama2020intelligent,wu2019beamforming,ye2020joint}, for IRS-aided networks, active and passive beamforming was jointly optimized with different objectives, such as minimizing the transmit power and minimizing the symbol error rate.

\textit{IRS-aided NOMA networks:} Since both IRS and NOMA are promising techniques for future wireless networks, their combination has been investigated recently \cite{ding2020simple,ding2020impact,zhu2019power,fu2019intelligent,yang2020intelligent,mu2019exploiting,hou2019reconfigurable}.
In \cite{ding2020simple}, an IRS was deployed to improve the coverage by assisting a cell-edge UD in data transmission, where this cell-edge UD is paired with a cell-center UD under the NOMA scheme.
The authors further investigated the impact of random phase shifting and coherent phase shifting for an IRS-aided NOMA system in \cite{ding2020impact}.
In \cite{zhu2019power} and \cite{fu2019intelligent}, the beamforming vectors of the BS and IRS were optimized for an IRS-assisted NOMA system.
The aforementioned works assumed that the BS-IRS-UD channel is non-line-of-sight (NLoS). Since the IRS can be pre-deployed, the path between the BS and the IRS can be line-of-sight (LoS) \cite{yang2020intelligent,mu2019exploiting,hou2019reconfigurable}.
In \cite{yang2020intelligent}, the authors assumed the BS-IRS-UD link to be LoS and optimized the active and passive beamforming vectors for a NOMA system aided by an IRS.
In \cite{mu2019exploiting}, the authors studied an IRS-assisted NOMA system by considering ideal and non-ideal IRSs.
In \cite{hou2019reconfigurable}, the IRS's parameters were designed for a prioritized UD in an IRS-assisted NOMA network.

\subsection{Motivation and Contributions}
Due to the introduction of the IRS, the performance analysis of the IRS-aided system becomes much more difficult than that of the conventional system. The channel statistics of the BS-IRS-UD link are crucial for performance evaluation. After optimizing the parameters of the IRS, it is a challenging task to derive the equivalent channel statistics of the BS-IRS-UD link.
Furthermore, most of the existing works assumed this link to be NLoS, such as \cite{ding2020simple,ding2020impact,zhu2019power,fu2019intelligent}.
Some works assumed that the link is LoS but without deriving exact channel statistics, e.g., \cite{yang2020intelligent,mu2019exploiting,hou2019reconfigurable}.
This motivates us to consider the BS-IRS-UD link to be either LoS or NLoS, and derive new channel statistics for further analysis.
As compared with the aforementioned works, we provide a more comprehensive analysis including both downlink and uplink IRS-aided NOMA and OMA networks.
Besides, since the aim of deploying IRSs is to improve the service coverage and NOMA prefers to pair users with distinctive channel conditions for system performance improvement \cite{ding2016impact,liu2016cooperative}, we consider a NOMA network in which a cell-center UD is paired with a cell-edge UD that cannot communicate with the BS directly and needs assistance from an IRS.
As compared with  \cite{cheng2020IRS}, this paper is more comprehensive, and the contributions can be summarized as follows:
\begin{itemize}
\item We provide a comprehensive study on the OP and ergodic rate (ER) of IRS-aided networks, including both downlink and uplink transmissions for NOMA and OMA.
\item We adopt the Nakagami-$m$ fading model for the BS-IRS-UD link and derive new channel statistics of this link based on the central limit theorem (CLT). Since the CLT-based channel statistics are inaccurate when the channel gain is near $0$, we further derive exact channel statistics for the channel gain near $0$ by utilizing the Laplace transform (LT).
\item We derive the closed-form expressions for the OP and ER for each scenario. To gain further insight, we derive the asymptotic approximations of the OP and ER at a high signal-to-noise ratio (SNR) to obtain the diversity order and high-SNR slope, respectively. We demonstrate that the number of IRS reflecting elements and Nakagami fading parameters affect the diversity order but have no influence on the high-SNR slope.
\item We assume the fixed power allocation for downlink NOMA and demonstrate that power-allocation coefficients affect the OP and ER but have no effect on the diversity order and high-SNR slope.
\item Finally, we compare the performances of IRS-aided and FDR-aided networks. Simulation results reveal the superiority of the IRS over FDR in the high-SNR regime.
\end{itemize}

\subsection{Organization and Notation}
The remainder of this paper is organized as follows: In Section \uppercase\expandafter{\romannumeral2}, the model of the IRS-aided NOMA network is described.
New channel statistics of the BS-IRS-UD link are presented in Section \uppercase\expandafter{\romannumeral3}.
The performance analysis for the downlink is conducted in Section \uppercase\expandafter{\romannumeral4}, followed by the analysis for the uplink in Section \uppercase\expandafter{\romannumeral5}.
Furthermore, numerical and simulation results are presented in Section \uppercase\expandafter{\romannumeral6}.
Finally, our conclusion is drawn in Section \uppercase\expandafter{\romannumeral7}.

In this paper, scalars are denoted by italic letters. Vectors and matrices are denoted by bold-face letters.
For a vector $\mathbf{v}$, $\mathrm{diag}(\mathbf{v})$ denotes a diagonal matrix in which each diagonal element is the corresponding element in $\mathbf{v}$, respectively.
$\mathbf{v}^T$ denotes the transpose of $\mathbf{v}$.
$\mathrm{arg}(\cdot)$ denotes the argument of a complex number.
$\mathbb{C}^{x\times y}$ denotes the space of $x \times y$ complex-valued matrices.
$\mathrm{P_r}(\cdot)$ and $\mathbb{E}(\cdot)$ denote the probability and the expectation, respectively.

\section{Model of IRS-Aided NOMA Networks}
\begin{figure}
\begin{center}
\includegraphics[width=0.5\textwidth]{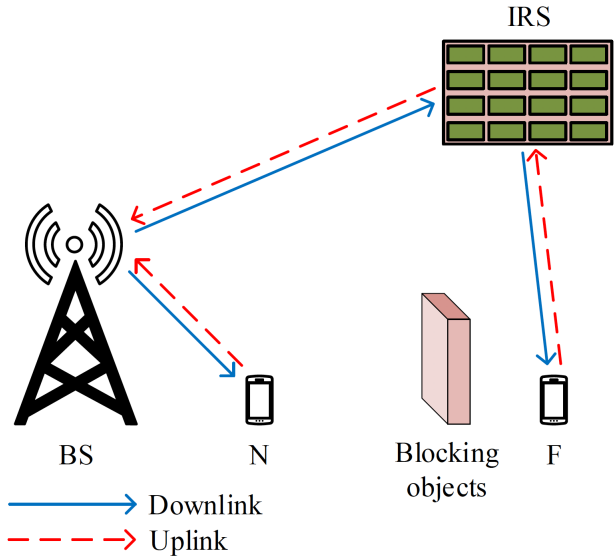}
\end{center}
\caption{Model of the IRS-aided NOMA network.}
\label{fig_system model}
\end{figure}

We consider an IRS-aided NOMA network with a single-antenna BS and two single-antenna UDs, denoted by N and F, respectively, as shown in Fig. \ref{fig_system model}.\footnote{By designing a precoding matrix, we can decompose the MIMO-NOMA into multiple separate SISO-NOMA \cite{ding2017survey}. Hence, in this paper, we only focus on fundamental IRS-aided SISO networks and characterize the corresponding performance. The derived results can be used for investigating IRS-aided MIMO networks in the future.}
More specifically, N is the cell-center UD that can communicate with the BS directly, while F is the cell-edge UD that needs help from an IRS for communications since there is no direct link between the BS and F due to long distance and blocking objects.
The IRS has $K$ reflecting elements, and its reflection-coefficient matrix is denoted by $\mathbf{\Theta}=\mathrm{diag}\left(\beta_1 e^{j\theta_1}, \beta_2 e^{j\theta_2}, \cdots, \beta_K e^{j\theta_K}\right)$ $\left(j=\sqrt{-1}\right)$, where $\beta_k \in [0,1]$ is the amplitude-reflection coefficient and $\theta_k\in [0,2\pi)$ is the phase-shift variable of the $k$th element that can be adjusted by the IRS $\left(k = 1, 2, \cdots, K\right)$.

\subsection{Channel Model}
All channels experience quasi-static flat fading, and the channel state information (CSI) of all channels is assumed to be perfectly known at the BS \cite{ding2020simple,ding2020impact,zhu2019power,fu2019intelligent}.
In particular, several channel estimation schemes for IRS-assisted systems have been proposed to acquire accurate CSI \cite{mishra2019channel,he2019cascaded,liu2020matrix,wang2019channel}.
The link between the BS and N is assumed to be NLoS as it is a link between the BS and the ground UD.
Hence, the small-scale fading between the BS and N follows the Rayleigh fading model and is denoted by $h\thicksim \mathcal{CN}(0,1)$, where $\mathcal{CN}(\cdot,\cdot)$ is the complex Gaussian distribution.
The BS-IRS and IRS-F links can be either LoS or NLoS for different scenarios.
Due to the severe path loss, the signals that are reflected by the IRS twice or more times are ignored.
The small-scale fading vector between the BS and the IRS is denoted by $\mathbf{G} \in \mathbb{C}^{1\times K}$.
The small-scale fading vector between the IRS and F is denoted by $\mathbf{g} \in \mathbb{C}^{K\times 1}$.
Particularly, they are $\mathbf{G}=[G_{1}, G_{2}, \cdots, G_{K}]$ and $\mathbf{g}=[g_{1}, g_{2}, \cdots, g_{K}]^T$, respectively.
All elements in $\mathbf{G}$ and $\mathbf{g}$ follow the Nakagami-$m$ fading model with fading parameters, $m_G$ and $m_g$, respectively.
In particular, it is NLoS for $m_\mathcal{G}=1$ and is LoS for $m_\mathcal{G}>1$ $\left(\mathcal{G} \in \{G, g\}\right)$.

\subsection{Signal Model}
\subsubsection{Downlink}
The BS transmits the signal $x=\sqrt{\alpha_1 P_b}s_{1}^d+\sqrt{\alpha_2 P_b}s_{2}^{d}$, where $P_b$ denotes the transmit power of the BS, $s_{1}^d$ and $s_{2}^{d}$ denote the transmitted signals to N and F, respectively, and $\alpha_1$ and $\alpha_2$ denote the power allocation coefficients for N and F, respectively ($\alpha_1+ \alpha_2=1$).
Note that we assume the fixed power allocation sharing between two UDs and set $\alpha_1 < \alpha_2$ for user fairness \cite{ding2020simple,ding2020impact,hou2019reconfigurable}.\footnote{Optimal power allocation strategies can further enhance the performance of the considered networks, which is beyond the scope of this paper.}
The received signals at N and F are given by
\begin{equation}
\begin{split}
y_N=hd_N^{-\frac{\alpha_h}{2}}x+n_1,
\end{split}
\end{equation}
and
\begin{equation}
\begin{split}
y_F=\left(\mathbf{G} \mathbf{\Theta} \mathbf{g} d_{F1}^{-\frac{\alpha_G}{2}}d_{F2}^{-\frac{\alpha_g}{2}}\right)x+n_2,
\end{split}
\end{equation}
respectively,
where $d_N$ and $d_{F1}$ denote the distances from the BS to N and the IRS, respectively, $d_{F2}$ denotes the distance between the IRS and F, $\alpha_h$, $\alpha_G$, and $\alpha_g$ denote the path loss exponents of BS-N, BS-IRS, and IRS-F links, respectively, and $n_1$ and $n_2$ denote the additive white Gaussian noises (AWGNs) at N and F, respectively, with the same variance $\sigma_n^2$.

At N, the signal of F is detected first, and the corresponding signal-to-interference-plus-noise ratio (SINR) is given by
\begin{equation}
\begin{split}
\mathrm{SINR}_{N,F}^d=\frac{|h|^2d_N^{-\alpha_h} \alpha_2}{|h|^2d_N^{-\alpha_h} \alpha_1+ 1/\rho},
\end{split}
\end{equation}
where $\rho=P_b/\sigma_n^2$ denotes the transmit SNR of the BS.
After implementing the successive interference cancellation (SIC), the signal of N is decoded, and the corresponding SNR is given by
\begin{equation}
\begin{split}
\mathrm{SNR}_N^d=|h|^2d_N^{-\alpha_h} \alpha_1\rho.
\end{split}
\end{equation}
At F, its signal is decoded directly by regarding N's signal as interference, and its SINR is given by
\begin{equation}
\begin{split}
\mathrm{SINR}_{F}^d=\frac{|\mathbf{G} \mathbf{\Theta} \mathbf{g}|^2d_{F1}^{-\alpha_G}d_{F2}^{-\alpha_g}\alpha_2}{|\mathbf{G} \mathbf{\Theta} \mathbf{g}|^2d_{F1}^{-\alpha_G}d_{F2}^{-\alpha_g}\alpha_1+ 1/\rho}.
\end{split}
\end{equation}

\subsubsection{Uplink}
The received signal at the BS is given by
\begin{equation}
\begin{split}
y=h d_N^{-\frac{\alpha_h}{2}} \sqrt{P_u} s_{1}^u + \mathbf{G} \mathbf{\Theta} \mathbf{g} d_{F1}^{-\frac{\alpha_G}{2}}d_{F2}^{-\frac{\alpha_g}{2}} \sqrt{P_u}  s_{2}^{u} + n,
\end{split}
\end{equation}
where $P_u$ denotes the transmit power of each UD, $s_{1}^u$ and $s_{2}^{u} $ denote the transmitted signals from N and F, respectively, and $n$ denotes the AWGN at the BS with variance $\sigma^2$.

At the BS, the signal of N is decoded first by regarding the signal from F as interference, and the corresponding SINR is given by
\begin{equation}\label{eq-UP-SNR1}
\begin{split}
\mathrm{SINR}_N^u=\frac{|h|^2d_N^{-\alpha_h}}{|\mathbf{G} \mathbf{\Theta} \mathbf{g}|^2d_{F1}^{-\alpha_G}d_{F2}^{-\alpha_g}+ 1/\rho'},
\end{split}
\end{equation}
where $\rho'=P_u/\sigma^2$ denotes the transmit SNR of each UD.
After carrying out the SIC, F's signal is detected with the SNR given by
\begin{equation}\label{eq-UP-SNR2}
\begin{split}
\mathrm{SNR}_F^u=|\mathbf{G} \mathbf{\Theta} \mathbf{g}|^2d_{F1}^{-\alpha_G}d_{F2}^{-\alpha_g}\rho'.
\end{split}
\end{equation}

\begin{remark}
The system-design insights can be summarized as follows: 1) By analyzing the performance of the designed system, the suitable number of IRS reflecting elements can be determined to strike a balance between hardware cost and system performance for different scenarios, i.e., downlink (or uplink) NOMA (or OMA) with fixed-rate (or adaptive-rate) transmission. 2) The analytical results contribute to the deployment of IRSs under realistic scenarios, i.e., the BS-IRS and IRS-UD links can be either LoS and NLoS. 3) Based on the impact of power allocation on system performance, we know how to set fixed power-allocation coefficients for different use scenarios of downlink NOMA.
\end{remark}

\section{New Channel Statistics of the BS-IRS-F Link}
In this section, we will present the optimized IRS's parameters and the channel statistics of the BS-IRS-F link.

\subsection{Parameters of the IRS}
For the  BS-IRS-F link, we aim to provide the best channel quality to F by adjusting the parameters of the IRS.
That is to maximize $|\mathbf{G} \mathbf{\Theta} \mathbf{g}| =\left|\sum_{k=1}^{K}\beta_k G_{k}g_{k}e^{j\theta_k}\right|$, where $G_{k}$ and $g_{k}$ are the $k$th element of $\mathbf{G}$ and $\mathbf{g}$, respectively. This can be achieved by intelligently adjusting the phase-shift variable $\theta_k$ for each element, i.e., the phases of all $G_{k}g_{k}e^{j\theta_k}$ are set to be the same.
Therefore, there is not only one solution for $\{\theta_k\}$ $(k = 1, 2, \cdots, K)$, and the generalized solution is given by $\theta_k=\tilde{\theta}-\mathrm{arg}(G_{k}g_{k})$, where $\tilde{\theta}$ is an arbitrary constant ranging in $[0, 2\pi)$.
After adopting the optimal $\{\theta_k\}$, we have
\begin{equation}\label{eq-IRS-Link}
\begin{split}
|\mathbf{G} \mathbf{\Theta} \mathbf{g}|^2 = \beta^2\left(\sum_{k=1}^{K}|G_{k}||g_{k}|\right)^2,
\end{split}
\end{equation}
where we assume that $\beta_k=\beta$, $\forall k$ without loss of generality.

\subsection{New Channel Statistics}
\begin{figure}
\begin{center}
\includegraphics[width=0.55\textwidth]{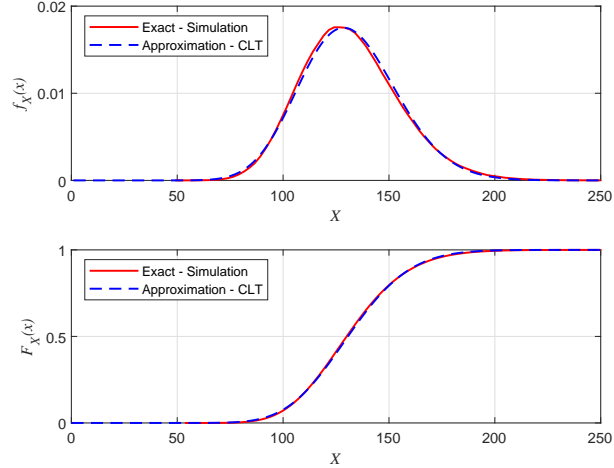}
\end{center}
\caption{The PDF and CDF of $X$ derived by the CLT when $K=30$, $m_G=3$, and $m_g=2$.}
\label{fig_Channel_Statistics}
\end{figure}

New channel statistics can be obtained by using the CLT as shown in the following lemma, which has been verified by Monte Carlo simulations as shown in Fig. \ref{fig_Channel_Statistics}.
\begin{lemma}\label{Lemm-Channel-CLT}
Denote that $X=\frac{\left(\sum_{k=1}^{K}|G_{k}||g_{k}|\right)^2}{K(1-\xi)}$, where $\xi=\frac{1}{m_Gm_g}\left(\frac{\Gamma\left(m_G+\frac{1}{2}\right)}{\Gamma(m_G)}\right)^2\left(\frac{\Gamma\left(m_g+\frac{1}{2}\right)}{\Gamma(m_g)}\right)^2$.
When the number of reflecting elements $K$ is large, $X$ tends to follow a noncentral chi-square distribution as $X \dot\thicksim \chi_1^{'2}(\lambda)$, where $\lambda=\frac{K\xi}{1-\xi}$.
Its probability density function (PDF) and cumulative distribution function (CDF) are given by
\begin{equation}
\begin{split}
f_X(x)=\frac{\lambda^{\frac{1}{4}}}{2}e^{-\frac{x+\lambda}{2}}x^{-\frac{1}{4}}I_{-\frac{1}{2}}\left(\sqrt{\lambda x}\right)
=e^{-\frac{x+\lambda}{2}}\sum_{i=0}^{\infty}\frac{\lambda^i x^{i-\frac{1}{2}}}{i!2^{2i+\frac{1}{2}}\Gamma\left(i+\frac{1}{2}\right)},
\end{split}
\end{equation}
and
\begin{equation}
\begin{split}
F_X(x)=1-Q_{\frac{1}{2}}\left(\sqrt{\lambda},\sqrt{x}\right)
=e^{-\frac{\lambda}{2}}\sum_{i=0}^{\infty}\frac{\lambda^i\gamma\left(i+\frac{1}{2},\frac{x}{2}\right)}{i!2^i\Gamma\left(i+\frac{1}{2}\right)},
\end{split}
\end{equation}
for $x\ge 0$, respectively, where $I_{v}\left(\cdot\right)$ is the modified Bessel function of the first kind, $Q_{v}\left(\cdot, \cdot\right)$ is the Marcum Q-function, $\Gamma(\cdot)$ is the gamma function, and $\gamma(\cdot,\cdot)$ is the lower incomplete gamma function.
\end{lemma}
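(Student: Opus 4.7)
The plan is to reduce the sum in the definition of $X$ to a single Gaussian random variable via the CLT, and then recognize the square of a standardized Gaussian as a noncentral chi-square with one degree of freedom.

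First I would set $Y_k = |G_k||g_k|$ for $k=1,\dots,K$. Since $\{G_k\}$ and $\{g_k\}$ are independent Nakagami-$m$ variables with fading parameters $m_G$ and $m_g$ (and, implicitly, unit spread $\Omega=1$), the $Y_k$ are i.i.d. and all moments exist. Using the standard Nakagami moment formula $\mathbb{E}\!\left[|G_k|^r\right]=\Gamma(m_G+r/2)/(\Gamma(m_G)\,m_G^{r/2})$ (and similarly for $|g_k|$), I would compute
\begin{equation}
\mu:=\mathbb{E}[Y_k]=\frac{\Gamma(m_G+1/2)}{\Gamma(m_G)\sqrt{m_G}}\cdot\frac{\Gamma(m_g+1/2)}{\Gamma(m_g)\sqrt{m_g}},\qquad \mathbb{E}[Y_k^2]=\mathbb{E}[|G_k|^2]\mathbb{E}[|g_k|^2]=1,
\end{equation}
so that $\mu^2=\xi$ and $\sigma^2:=\mathrm{Var}(Y_k)=1-\xi$. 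This is the crucial bookkeeping step that matches the quantity $1-\xi$ appearing in the normalization of $X$.

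Next I would apply the classical CLT to the sum $S_K=\sum_{k=1}^{K}Y_k$. For large $K$, $S_K\;\dot\thicksim\;\mathcal{N}(K\mu,K\sigma^2)$, hence the standardized variable $Z_K:=S_K/\sqrt{K\sigma^2}$ is approximately $\mathcal{N}(\sqrt{K}\mu/\sigma,1)$. Substituting $\mu^2=\xi$ and $\sigma^2=1-\xi$, the shift parameter equals $\sqrt{K\xi/(1-\xi)}=\sqrt{\lambda}$. Since $X=S_K^2/(K(1-\xi))=Z_K^2$, and the square of an $\mathcal{N}(\sqrt{\lambda},1)$ variable is, by definition, $\chi_1^{'2}(\lambda)$, the distributional claim $X\;\dot\thicksim\;\chi_1^{'2}(\lambda)$ follows.

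Finally, the stated PDF and CDF are simply the standard representations of the noncentral chi-square law with one degree of freedom: the PDF in closed form using $I_{-1/2}$, and the equivalent Poisson-weighted series obtained from the Taylor expansion $I_{-1/2}(\sqrt{\lambda x})=\sum_{i\ge 0}(\lambda x/4)^i/(i!\,\Gamma(i+1/2))\cdot(\lambda x)^{-1/4}\cdot\text{const}$; analogously the CDF is either $1-Q_{1/2}(\sqrt{\lambda},\sqrt{x})$ or its Poisson-mixture form in terms of lower incomplete gamma functions. I would just quote these from a standard reference and verify that the two forms agree by termwise integration.

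The main obstacle, in my view, is not any single calculation but the consistency of the normalization: one must carefully check that $\mathbb{E}[Y_k^2]=1$ under the implicit Nakagami convention $\Omega=1$ and that $\mu^2$ as computed really coincides with the $\xi$ defined in the lemma, since it is this coincidence that makes the CLT output match the exact constants $K(1-\xi)$ and $K\xi/(1-\xi)$ appearing in the statement. A secondary, more delicate, issue is justifying the approximation for moderate $K$ and for the tail region $x\approx 0$; the authors explicitly flag this as inaccurate and promise a separate Laplace-transform treatment, so the CLT-based proof here is only asserting the in-distribution limit and need not address that refinement.
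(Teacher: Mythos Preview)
Your proposal is correct and follows essentially the same route as the paper: compute the mean and variance of $|G_k||g_k|$ from Nakagami moments (yielding $\mu^2=\xi$ and $\sigma^2=1-\xi$), apply the CLT to the sum, standardize, and identify the square as $\chi_1^{'2}(\lambda)$, then quote the standard Bessel/Marcum and Poisson-mixture forms for the PDF and CDF. The only cosmetic difference is that you obtain $\mathrm{Var}(Y_k)$ via $\mathbb{E}[Y_k^2]=1$ directly, whereas the paper writes it as $(\mathrm{Var}_g+\mu_g^2)(\mathrm{Var}_G+\mu_G^2)-\mu_g^2\mu_G^2$; these are of course equivalent.
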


\begin{proof}
See Appendix \ref{Appen-Lemm-Channel-CLT}.
\end{proof}

\begin{remark}\label{Rema-1}
The CLT-based PDF, $f_X(x)$, is not accurate for $x\rightarrow0^{+}$, which will result in that the derived OP for downlink transmission is inaccurate in the high-SNR regime \cite{ding2020impact}.
\end{remark}

To address the issue stated in Remark \ref{Rema-1}, we further derive exact channel statistics for the channel gain near $0$ without using the CLT as shown in the following lemma.
\begin{lemma}\label{Lemm-Channel-Appro}
Denote that $Z=\sum_{k=1}^{K}|G_{k}||g_{k}|$. When $m_G\neq m_g$, the PDF and CDF of $Z$ for $z\rightarrow0^+$ are given by
\begin{equation}\label{eq-Lemm-1}
\begin{split}
f_Z^{0^+}(z)=\frac{\tilde{m}^K}{\Gamma(2m_sK)}z^{2m_sK-1}e^{-2\sqrt{m_sm_l}z},
\end{split}
\end{equation}
and
\begin{equation}\label{eq-Lemm-2}
\begin{split}
F_Z^{0^+}(z)=\frac{\tilde{m}^K\left(4m_sm_l\right)^{-m_sK}}{\Gamma(2m_sK)}\gamma\left(2m_sK,2\sqrt{m_sm_l}z\right),
\end{split}
\end{equation}
for $z\ge 0$, respectively,
where $m_s=\min\{m_G,m_g\}$, $m_l=\max\{m_G,m_g\}$, and\\ $\tilde{m}=\frac{\sqrt{\pi}4^{m_s-m_l+1}(m_sm_l)^{m_s}\Gamma(2m_s)\Gamma(2m_l-2m_s)}{\Gamma(m_s)\Gamma(m_l)\Gamma\left(m_l-m_s+\frac{1}{2}\right)}$.
\end{lemma}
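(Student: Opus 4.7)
The plan is to work through the Laplace transform of $Z$ and extract its large-$s$ behavior, since the large-$s$ region of $\mathcal{L}_Z(s)$ controls the small-$z$ behavior of $f_Z(z)$. Because the $Y_k=|G_k||g_k|$ are i.i.d., it suffices to obtain a tractable asymptotic for a single $\mathcal{L}_{Y}(s)$ and then raise it to the $K$-th power.

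First I would write down the exact density of $Y=|G||g|$ by Mellin-type convolution of the two Nakagami-$m$ densities. A direct substitution $w=v^2$ in the integral $\int_0^\infty v^{-1} f_{|G|}(y/v) f_{|g|}(v)\,dv$ together with the identity $\int_0^\infty t^{\nu-1}e^{-a/t-bt}\,dt=2(a/b)^{\nu/2}K_\nu(2\sqrt{ab})$ yields
\begin{equation*}
f_Y(y)=\frac{4(m_Gm_g)^{(m_G+m_g)/2}}{\Gamma(m_G)\Gamma(m_g)}\,y^{m_G+m_g-1}K_{m_G-m_g}\!\left(2\sqrt{m_Gm_g}\,y\right),\qquad y>0.
\end{equation*}
Next I would plug this into $\mathcal{L}_Y(s)=\int_0^\infty e^{-sy}f_Y(y)\,dy$ and apply the standard Kummer-type evaluation (Gradshteyn--Ryzhik 6.621.3)
\begin{equation*}
\int_0^\infty x^{\mu-1}e^{-\alpha x}K_\nu(\beta x)\,dx=\frac{\sqrt{\pi}\,(2\beta)^{\nu}\Gamma(\mu+\nu)\Gamma(\mu-\nu)}{(\alpha+\beta)^{\mu+\nu}\,\Gamma(\mu+1/2)}\;{}_2F_1\!\left(\mu+\nu,\,\nu+\tfrac12;\,\mu+\tfrac12;\,\tfrac{\alpha-\beta}{\alpha+\beta}\right),
\end{equation*}
with $\mu=m_G+m_g$, $\nu=m_l-m_s$, $\alpha=s$, $\beta=2\sqrt{m_sm_l}$.

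The key step is the large-$s$ asymptotics. The argument of ${}_2F_1$ tends to $1$ as $s\to\infty$, and under $m_G\neq m_g$ the parameters satisfy $c-a-b=-2(m_l-m_s)<0$, so the standard connection formula gives ${}_2F_1(a,b;c;z)\sim \frac{\Gamma(c)\Gamma(a+b-c)}{\Gamma(a)\Gamma(b)}(1-z)^{c-a-b}$. Substituting $1-z=2\beta/(s+\beta)$ and simplifying with the duplication identity collapses the prefactor exactly into the constant $\tilde m$ defined in the statement, yielding
\begin{equation*}
\mathcal{L}_Y(s)\;\sim\;\frac{\tilde m}{\bigl(s+2\sqrt{m_sm_l}\bigr)^{2m_s}}\qquad(s\to\infty).
\end{equation*}
By independence $\mathcal{L}_Z(s)=[\mathcal{L}_Y(s)]^K\sim \tilde m^K/(s+2\sqrt{m_sm_l})^{2m_sK}$, and inverting with $\mathcal{L}^{-1}\{(s+a)^{-\beta}\}=z^{\beta-1}e^{-az}/\Gamma(\beta)$ produces (\ref{eq-Lemm-1}). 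The CDF (\ref{eq-Lemm-2}) then follows by integrating and using $\int_0^z t^{\beta-1}e^{-at}\,dt=a^{-\beta}\gamma(\beta,az)$.

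The main obstacle is the asymptotic step: strictly speaking, the large-$s$ expansion of $\mathcal{L}_Z(s)$ controls only the small-$z$ behavior of $f_Z(z)$ via a Tauberian / Watson-type argument, so care is needed to justify that the leading large-$s$ term of $\mathcal{L}_Y$ raised to the $K$-th power indeed gives the correct leading small-$z$ form and not just a formal inverse. The hypothesis $m_G\neq m_g$ is essential here, since it is exactly what keeps the ${}_2F_1$ in the regime $c<a+b$ where the clean power-type singularity at $z=1$ applies; the equal-parameter case would instead produce a logarithmic correction and require a separate treatment. All remaining manipulations --- the Bessel representation, the ${}_2F_1$ integral, and the Laplace inversion --- are essentially bookkeeping once this asymptotic is in hand.
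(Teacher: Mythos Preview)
Your proposal is correct and follows essentially the same route as the paper: compute the Laplace transform of the product-Nakagami density via the Bessel-$K$ representation and Gradshteyn--Ryzhik 6.621.3, extract the leading large-$s$ behaviour of the resulting ${}_2F_1$ at argument $1$, raise to the $K$-th power, and invert. The only cosmetic difference is that the paper writes the Bessel order as $m_s-m_l<0$ (using $K_{-\nu}=K_\nu$), which puts the hypergeometric in the regime $c-a-b=2(m_l-m_s)>0$ and lets it apply Gauss's summation formula directly to get a finite value, whereas your choice $\nu=m_l-m_s>0$ lands in the $c-a-b<0$ regime and requires the singular part of the connection formula together with the extra $(1-z)^{c-a-b}$ factor; both yield the same $\tilde m\,(s+2\sqrt{m_sm_l})^{-2m_s}$ after simplification.
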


\begin{proof}
See Appendix \ref{Appen-Lemm-Channel-Appro}.
\end{proof}

\section{Performance Analysis for Downlink Transmission}
In this section, we will investigate the performance of downlink NOMA and OMA networks.
In particular, for each scenario, the OP for the fixed-rate transmission and the ER for the adaptive-rate transmission are derived.

\subsection{NOMA}
\subsubsection{Outage Probability}
For the fixed-rate transmission system, OP is a widely used metric to measure system performance.
The OPs of N and F are given by
\begin{equation}
\begin{split}
\mathbb{P}_N^d=1-\mathrm{P_r}\left(\mathrm{SINR}_{N,F}^d \ge \tilde{\gamma}_F, \mathrm{SNR}_N^d \ge \tilde{\gamma}_N\right),
\end{split}
\end{equation}
and
\begin{equation}
\begin{split}
\mathbb{P}_F^d=\mathrm{P_r}\left(\mathrm{SINR}_F^d < \tilde{\gamma}_F\right),
\end{split}
\end{equation}
respectively, where $\tilde{\gamma}_N=2^{\tilde{R}_N}-1$ and $\tilde{\gamma}_F=2^{\tilde{R}_F}-1$ with $\tilde{R}_N$ and $\tilde{R}_F$ being the target rates of N and F, respectively.

Based on the new channel statistics in Lemma \ref{Lemm-Channel-CLT}, the OPs of N and F are derived in the following theorem.
\begin{theorem}\label{Theo-Downlink-OP}
In the considered IRS-aided NOMA network, the OPs of N and F for the downlink are given by
\begin{equation}\label{eq-NOMA-OP1}
\begin{split}
\mathbb{P}_N^d=1-e^{-\tilde{\rho}_{m}},
\end{split}
\end{equation}
and
\begin{equation}\label{eq-NOMA-OP2}
\begin{split}
\mathbb{P}_F^d\approx e^{-\frac{\lambda}{2}}\sum_{i=0}^{\infty}\frac{\lambda^i \gamma\left(i+\frac{1}{2},\frac{\tilde{\rho}}{2}\right)}{i!2^{i}\Gamma\left(i+\frac{1}{2}\right)},
\end{split}
\end{equation}
respectively, where $\tilde{\rho}_{m}=\max\left\{\frac{\tilde{\gamma}_F}{{\left(\alpha_2-\alpha _1\tilde{\gamma}_F\right)a\rho}}, \frac{\tilde{\gamma}_N}{{a\alpha_1\rho}}\right\}$ and $\tilde{\rho}=\frac{\tilde{\gamma}_F}{{(\alpha_2-\alpha_1\tilde{\gamma}_F)b\rho}}$ with $a=d_N^{-\alpha_h}$ and $b=K\beta^2(1-\xi)d_{F1}^{-\alpha_G}d_{F2}^{-\alpha_g}$.
Note that when we set power allocation coefficients, we need to ensure that $\alpha_2-\alpha_1\tilde{\gamma}_F>0$.
\end{theorem}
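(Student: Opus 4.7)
The plan is to reduce each outage event to a tail/CDF event for a single scalar random variable whose distribution is already known, and then read off the answer. For N, all randomness in $\mathrm{SINR}_{N,F}^d$ and $\mathrm{SNR}_N^d$ lives in $|h|^2$, which is $\mathrm{Exp}(1)$ because $h\sim\mathcal{CN}(0,1)$. For F, after the phase optimization of Section III.A, the only randomness in $\mathrm{SINR}_F^d$ is $|\mathbf{G}\boldsymbol{\Theta}\mathbf{g}|^2$, which by \eqref{eq-IRS-Link} equals $\beta^2 K(1-\xi)\,X$, with $X$ the noncentral chi-square variable characterized in Lemma \ref{Lemm-Channel-CLT}. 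So the proof is essentially two changes of variable plus one application each of the exponential tail and $F_X$.

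For N, I would first rewrite the event $\mathrm{SINR}_{N,F}^d\ge \tilde{\gamma}_F$ as $|h|^2 a(\alpha_2-\alpha_1\tilde{\gamma}_F)\ge \tilde{\gamma}_F/\rho$. Using the standing assumption $\alpha_2-\alpha_1\tilde{\gamma}_F>0$ (explicitly noted in the theorem statement so the inequality does not flip), this becomes $|h|^2\ge \tilde{\gamma}_F/[(\alpha_2-\alpha_1\tilde{\gamma}_F)a\rho]$. Similarly, $\mathrm{SNR}_N^d\ge\tilde{\gamma}_N$ is $|h|^2\ge \tilde{\gamma}_N/(a\alpha_1\rho)$. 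The intersection of both events is $\{|h|^2\ge\tilde{\rho}_m\}$ with $\tilde{\rho}_m$ the max of the two thresholds, so $\mathbb{P}_N^d = 1-\Pr(|h|^2\ge\tilde{\rho}_m) = 1-e^{-\tilde{\rho}_m}$ by the exponential survival function.

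For F, I would rewrite $\mathrm{SINR}_F^d<\tilde{\gamma}_F$ as $|\mathbf{G}\boldsymbol{\Theta}\mathbf{g}|^2<\tilde{\gamma}_F/[(\alpha_2-\alpha_1\tilde{\gamma}_F)\rho d_{F1}^{-\alpha_G}d_{F2}^{-\alpha_g}]$, again using $\alpha_2-\alpha_1\tilde{\gamma}_F>0$. Substituting $|\mathbf{G}\boldsymbol{\Theta}\mathbf{g}|^2=\beta^2 K(1-\xi)X$ from \eqref{eq-IRS-Link} and dividing by the constant $\beta^2 K(1-\xi)$ collapses the event to $\{X<\tilde{\rho}\}$, where $\tilde{\rho}$ is exactly the constant named in the theorem (absorbing $\beta^2 K(1-\xi)d_{F1}^{-\alpha_G}d_{F2}^{-\alpha_g}$ into $b$). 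Evaluating $F_X(\tilde{\rho})$ using the series form of the CDF in Lemma \ref{Lemm-Channel-CLT} yields \eqref{eq-NOMA-OP2}; the $\approx$ sign is inherited from the CLT-based statement of Lemma \ref{Lemm-Channel-CLT}.

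I do not expect any real obstacle in this proof; it is a direct substitution. The one piece of bookkeeping worth stating carefully is that the joint event for N is a conjunction of two lower bounds on the same scalar $|h|^2$, whose intersection is the single lower bound by the larger threshold, which is where the $\max$ in $\tilde{\rho}_m$ comes from. The second item to highlight explicitly in the writeup is the positivity condition $\alpha_2-\alpha_1\tilde{\gamma}_F>0$, without which the rearrangements above would flip inequalities and the expressions for $\tilde{\rho}_m$ and $\tilde{\rho}$ would be meaningless.
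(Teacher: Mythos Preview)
Your proposal is correct and follows essentially the same approach as the paper: reduce the events for N and F to threshold conditions on $Y=|h|^2$ and on $X$, respectively, and then evaluate $F_Y(\tilde{\rho}_m)$ and $F_X(\tilde{\rho})$ using the exponential CDF and Lemma~\ref{Lemm-Channel-CLT}. Your writeup is in fact slightly more explicit than the paper's about why the $\max$ arises and why the positivity condition $\alpha_2-\alpha_1\tilde{\gamma}_F>0$ is needed.
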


\begin{proof}
First, we denote that $Y=|h|^2$, and its PDF and CDF are given by $f_Y(y)=e^{-y}$ for $y\ge 0$ and $F_Y(y)=1-e^{-y}$ for $y\ge 0$, respectively.
Then, $\mathbb{P}_N^d$ can be derived as
\begin{equation}
\begin{split}
\mathbb{P}_N^d=1- \mathrm{P_r}\left(\frac{a\alpha_2 Y}{a\alpha_1 Y + 1/\rho} \ge \tilde{\gamma}_F, a\alpha_1 \rho Y \ge \tilde{\gamma}_N\right)
=F_Y(\tilde{\rho}_{m}).
\end{split}
\end{equation}
On the other hand, $\mathbb{P}_F^d$ can be derived as
\begin{equation}
\begin{split}
\mathbb{P}_F^d=\mathrm{P_r}\left(\frac{b\alpha_2 X}{b\alpha_1 X + 1/\rho} < \tilde{\gamma}_F\right)
\approx F_X(\tilde{\rho}).
\end{split}
\end{equation}
This completes the proof.
\end{proof}

\begin{remark}\label{Rema-3}
According to Remark \ref{Rema-1}, the OP of F is accurate in the low-SNR regime but is inaccurate in the high-SNR regime.
Because the OP of F for $\rho\rightarrow\infty$ is derived by using $\int_{0}^{0^{+}}f_X(x)dx$.
\end{remark}

To solve the aforementioned problem, we can use Lemma \ref{Lemm-Channel-Appro} to derive the high-SNR approximation of F's OP in the following proposition.
\begin{proposition}\label{Prop-Downlink-OP}
In the high-SNR regime, $\mathbb{P}_N^{d}$ can be approximated as
\begin{equation}\label{eq-NOMA-OP3}
\begin{split}
\mathbb{P}_N^{d,\infty}=\tilde{\rho}_{m}.
\end{split}
\end{equation}
When $m_G\neq m_g$, the high-SNR approximation of $\mathbb{P}_F^{d}$ is given by
\begin{equation}\label{eq-NOMA-OP4}
\begin{split}
\mathbb{P}_F^{d,\infty}=\frac{\tilde{m}^K\tilde{c}_1^{2m_sK}}{\Gamma(2m_sK+1)}\rho^{-m_sK},
\end{split}
\end{equation}
where $\tilde{c}_1=\sqrt{\frac{\tilde{\gamma}_F}{c(\alpha_2-\alpha_1\tilde{\gamma}_F)}}$ with $c=\beta^2 d_{F1}^{-\alpha_G}d_{F2}^{-\alpha_g}$.
\end{proposition}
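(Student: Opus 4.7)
The plan is to treat the two asymptotic approximations separately, since they rely on different tools: $\mathbb{P}_N^{d,\infty}$ follows directly from a Taylor expansion of the exponential CDF in Theorem~\ref{Theo-Downlink-OP}, while $\mathbb{P}_F^{d,\infty}$ requires the non-CLT channel statistics in Lemma~\ref{Lemm-Channel-Appro} as flagged by Remark~\ref{Rema-3}.

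For the N branch, I would observe that $\tilde{\rho}_m$ carries a factor of $1/\rho$, so $\tilde{\rho}_m\to 0$ as $\rho\to\infty$. Applying the first-order Maclaurin expansion $1-e^{-x}=x+O(x^2)$ to $\mathbb{P}_N^d=1-e^{-\tilde{\rho}_m}$ immediately yields $\mathbb{P}_N^{d,\infty}=\tilde{\rho}_m$. This step is routine and the approximation is just the leading $1/\rho$ term.

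For the F branch, the key is that the outage event $\mathrm{SINR}_F^d<\tilde{\gamma}_F$ is equivalent, after rearranging using $\alpha_2-\alpha_1\tilde{\gamma}_F>0$ and substituting $|\mathbf{G}\mathbf{\Theta}\mathbf{g}|^2=\beta^2 Z^2$ from \eqref{eq-IRS-Link}, to $Z<\tilde{c}_1/\sqrt{\rho}$. As $\rho\to\infty$ the threshold tends to $0^+$, so I can legitimately replace $F_Z$ with the exact near-zero expression $F_Z^{0^+}$ from \eqref{eq-Lemm-2}. Next I would expand the lower incomplete gamma function for a small argument using $\gamma(a,u)=\sum_{n=0}^{\infty}\frac{(-1)^n u^{a+n}}{n!(a+n)}$ and keep only the leading term $\gamma(a,u)\sim u^a/a$. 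Plugging $a=2m_sK$ and $u=2\sqrt{m_sm_l}\,\tilde{c}_1/\sqrt{\rho}$ into $F_Z^{0^+}$, the $(4m_sm_l)^{m_sK}$ factor coming from $u^a$ cancels the $(4m_sm_l)^{-m_sK}$ prefactor, leaving
\begin{equation*}
\mathbb{P}_F^{d,\infty}=\frac{\tilde{m}^K}{2m_sK\,\Gamma(2m_sK)}\left(\tilde{c}_1/\sqrt{\rho}\right)^{2m_sK}=\frac{\tilde{m}^K \tilde{c}_1^{2m_sK}}{\Gamma(2m_sK+1)}\rho^{-m_sK},
\end{equation*}
using $2m_sK\,\Gamma(2m_sK)=\Gamma(2m_sK+1)$.

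The main obstacle is conceptual rather than computational: one must justify why the CLT-based $F_X$ used in Theorem~\ref{Theo-Downlink-OP} cannot be trusted for F's high-SNR limit and why the substitution $F_Z\to F_Z^{0^+}$ is legitimate. I would address this by pointing back to Remark~\ref{Rema-1}/Remark~\ref{Rema-3}: the CLT captures the bulk of the distribution around its mean but misses the true polynomial behaviour at the left tail, and the outage region for $\rho\to\infty$ shrinks into precisely that left tail. Once this is granted, every remaining step is a tidy algebraic simplification, and the resulting $\rho^{-m_sK}$ scaling gives the diversity order $m_sK$ stated later in the paper.
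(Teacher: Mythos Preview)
Your proposal is correct and follows essentially the same route as the paper: a first-order Maclaurin expansion of $1-e^{-\tilde{\rho}_m}$ for N, and for F the substitution $Z<\tilde{c}_1\rho^{-1/2}$ followed by a leading-order expansion of $F_Z^{0^+}$ from Lemma~\ref{Lemm-Channel-Appro}. The only cosmetic difference is that the paper expands $\gamma(2m_sK,\cdot)$ via the series $\gamma(a,u)=\sum_{l\ge 0}\frac{u^{a+l}e^{-u}}{\Gamma(a+l+1)}$ and then expands the exponential, whereas you use the alternating series $\gamma(a,u)\sim u^a/a$ directly; both extract the same leading term and the resulting simplification is identical.
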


\begin{proof}
By expanding the exponential function in \eqref{eq-NOMA-OP1} and extracting the leading-order term, we can obtain \eqref{eq-NOMA-OP3}.
On the other hand, by expanding the lower incomplete gamma function in \eqref{eq-Lemm-2}, we have
\begin{equation}
\begin{split}
F_Z^{0^+}(z)&=\tilde{m}^K\sum_{l=0}^{\infty}\frac{(2\sqrt{m_sm_l})^{l}z^{l+2m_sK}}{\Gamma(l+2m_sK+1)}e^{-2\sqrt{m_sm_l}z}.
\end{split}
\end{equation}
Then, by using the expansion of the exponential function, we have
\begin{equation}\label{eq-4-2}
\begin{split}
F_Z^{0^+}(z)=\tilde{m}^K\sum_{l=0}^{\infty}\frac{(2\sqrt{m_sm_l})^{l}z^{l+2m_sK}}{\Gamma(l+2m_sK+1)}
\sum_{i=0}^{\infty}\frac{\left(-2\sqrt{m_sm_l}\right)^iz^i}{i!}.
\end{split}
\end{equation}
Meanwhile, the OP of F for $\rho\rightarrow\infty$ can be derived as
\begin{equation}\label{eq-4-3}
\begin{split}
\mathbb{P}_F^{d,\infty}=\mathrm{P_r}\left(\frac{c\alpha_2 Z^2}{c\alpha_1 Z^2 + 1/\rho} < \tilde{\gamma}_F\right)=F_Z^{0^+}\left(\tilde{c}_1\rho^{-\frac{1}{2}}\right).
\end{split}
\end{equation}
Finally, by substituting $z=\tilde{c}_1\rho^{-\frac{1}{2}}$ into \eqref{eq-4-2} and extracting the leading-order term, we can approximate $\mathbb{P}_F^{d,\infty}$ as \eqref{eq-NOMA-OP4}.
This completes the proof.
\end{proof}

\begin{corollary}\label{Coro-Downlink-OP}
In the considered IRS-aided NOMA network, the diversity orders of N and F for the downlink are given by $\mathcal{D}_N^d=1$ and $\mathcal{D}_F^d=m_sK$, respectively.
\end{corollary}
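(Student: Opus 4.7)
The plan is to apply the standard definition of the diversity order,
$$\mathcal{D} = -\lim_{\rho \to \infty} \frac{\log \mathbb{P}^{\infty}(\rho)}{\log \rho},$$
directly to the high-SNR expressions already assembled in Proposition \ref{Prop-Downlink-OP}. Since both formulas have been reduced to closed-form monomials in $\rho$ (up to SNR-independent constants), no further integration, series manipulation, or probabilistic argument is needed beyond reading off the exponent of $\rho$.

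For user N, I would start from \eqref{eq-NOMA-OP3}, $\mathbb{P}_N^{d,\infty}=\tilde{\rho}_m$. Both arguments of the maximum defining $\tilde{\rho}_m$ are positive constants divided by $\rho$, so their maximum is also a positive constant times $\rho^{-1}$. Substituting into the limit above gives $\mathcal{D}_N^d=1$. For user F, I would start from \eqref{eq-NOMA-OP4}, observe that the prefactor $\tilde{m}^K\tilde{c}_1^{2m_sK}/\Gamma(2m_sK+1)$ is independent of $\rho$, and conclude $\mathcal{D}_F^d=m_sK$ by the same one-line computation.

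The only point that deserves emphasis is \emph{which} high-SNR expression to feed into the limit. As flagged in Remark \ref{Rema-3}, the CLT-based CDF from Lemma \ref{Lemm-Channel-CLT} misrepresents the behaviour of $F_X$ near $0$, and using it in Theorem \ref{Theo-Downlink-OP} would yield a spurious diversity order for F (essentially because the noncentral $\chi^2$ density does not vanish at the origin to the correct polynomial order). The Laplace-transform-based Lemma \ref{Lemm-Channel-Appro} supplies the correct small-$z$ tail $F_Z^{0^+}(z)\sim \mathrm{const}\cdot z^{2m_sK}$, and the change of variables $X=Z^2$ together with the substitution $z=\tilde{c}_1\rho^{-1/2}$ in \eqref{eq-4-3} converts this $2m_sK$ into the $\rho^{-m_sK}$ decay appearing in Proposition \ref{Prop-Downlink-OP}. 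This is already carried out in the proof of Proposition \ref{Prop-Downlink-OP}, so the corollary is genuinely a one-step consequence.

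Consequently, I do not anticipate any real technical obstacle; the proof is essentially two substitutions. The only care required is to state clearly that the limit is taken with respect to the asymptotic expressions \eqref{eq-NOMA-OP3} and \eqref{eq-NOMA-OP4}, and to note in passing that the result for F depends on the Nakagami parameters only through $m_s=\min\{m_G,m_g\}$ and scales linearly in the number of reflecting elements $K$, which is the interpretive content the corollary is really trying to communicate.
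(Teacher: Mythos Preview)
Your approach is essentially the same as the paper's: both simply read off the $\rho$-exponent from the high-SNR expressions in Proposition~\ref{Prop-Downlink-OP}. The one point you overlook is that \eqref{eq-NOMA-OP4} (and Lemma~\ref{Lemm-Channel-Appro} behind it) is derived only for $m_G\neq m_g$, whereas the corollary is stated without that restriction; the paper's proof closes this small gap by remarking that the case $m_G=m_g$ follows ``according to the limit,'' i.e., by continuity in the fading parameters.
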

\begin{proof}
Based on Proposition \ref{Prop-Downlink-OP}, we can obtain that the diversity order of F is $m_sK$ when $m_G\neq m_g$. For the case of $m_G=m_g$, it is also $m_sK$ according to the limit. This completes the proof.
\end{proof}

\begin{remark}
The diversity order of F for the downlink NOMA is affected by the number of reflecting elements and Nakagami fading parameters.
\end{remark}

\subsubsection{Ergodic Rate}
For the adaptive-rate transmission system, ER is a commonly used metric to measure system performance.
The ERs of N and F are given by
\begin{equation}
\begin{split}
R_N^d&=\mathbb{E}\left(\log_2\left(1+\mathrm{SNR}_N^d\right)\right),
\end{split}
\end{equation}
and
\begin{equation}\label{eq-NOMA-ER}
\begin{split}
R_F^d&=\mathbb{E}\left(\log_2\left(1+\mathrm{SINR}_F^d\right)\right),
\end{split}
\end{equation}
respectively.
\begin{remark}
For the case of adaptive-rate transmission, the SINR at F is $\min\left\{\mathrm{SINR}_{N,F}^d,\mathrm{SINR}_{F}^d\right\}$.
However, in the considered system, F has a much more severe path loss than N due to a much larger distance to the BS. Therefore, we have $\min\left\{\mathrm{SINR}_{N,F}^d,\mathrm{SINR}_{F}^d\right\}\approx \mathrm{SINR}_{F}^d$.
\end{remark}
After some mathematical manipulations, the ERs of N and F are obtained in the following theorem.
\begin{theorem}\label{Theo-Downlink-ER}
In the considered IRS-aided NOMA network, the ERs of N and F for the downlink are, respectively, given by
\begin{equation}\label{eq-NOMA-ER1}
\begin{split}
R_N^d=-\frac{e^{\frac{1}{a\alpha_1 \rho}}}{\ln(2)}\mathrm{Ei}\left(-\frac{1}{a\alpha_1 \rho}\right),
\end{split}
\end{equation}
and
\begin{equation}\label{eq-NOMA-ER2}
\begin{split}
R_F^d\approx \log_2(1+\tilde{\alpha})-\frac{e^{-\frac{\lambda}{2}}}{\ln(2)}\sum_{i=0}^{\infty}
\frac{\lambda^i}{i!2^i\Gamma\left(i+\frac{1}{2}\right)}\sum_{l=1}^{u_1}\omega_{1,l}\mathcal{J}_1(t_l),
\end{split}
\end{equation}
where $\mathrm{Ei}(\cdot)$ is the exponential integral, $\tilde{\alpha}=\frac{\alpha_2}{\alpha_1}$, $u_1$ is the number of nodes for the Chebyshev-Gauss quadrature and determines the approximation precision, $\omega_{1,l}=\frac{\pi}{u_1}$ is the weight, $t_l=cos\left(\frac{2l-1}{2u_1}\pi\right)$, and
\begin{equation}
\begin{split}
\mathcal{J}_1(t)=\gamma\left(i+\frac{1}{2},\frac{\tilde{\alpha}(1+t)}{4b\rho\alpha_2-2b\rho\alpha_1\tilde{\alpha}(1+t)}\right)\frac{\sqrt{1-t^2}}{1+\frac{2}{\tilde{\alpha}}+t}.
\end{split}
\end{equation}
\end{theorem}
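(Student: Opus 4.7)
The plan is to handle the two ergodic rates separately. For $R_N^d$, since $Y = |h|^2$ is unit-mean exponential and $\mathrm{SNR}_N^d = a\alpha_1\rho Y$, the expectation reduces to $\frac{1}{\ln 2}\int_0^\infty \ln(1 + a\alpha_1\rho y)\,e^{-y}\,dy$. Applying the classical identity $\int_0^\infty \ln(1 + cy)\,e^{-y}\,dy = -e^{1/c}\,\mathrm{Ei}(-1/c)$ with $c = a\alpha_1\rho$ (e.g., via integration by parts and the integral representation of the exponential integral) immediately yields \eqref{eq-NOMA-ER1}.

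For $R_F^d$, the starting observation is that $\mathrm{SINR}_F^d = b\alpha_2 X/(b\alpha_1 X + 1/\rho)$ is monotone increasing in $X$ and bounded above by $\tilde{\alpha} = \alpha_2/\alpha_1$; hence $\log_2(1+\mathrm{SINR}_F^d)$ takes values in $[0,\log_2(1+\tilde{\alpha})]$. Writing the expectation as $\int_0^\infty \mathrm{P_r}(\log_2(1+\mathrm{SINR}_F^d) > t)\,dt$ and subtracting gives $R_F^d = \log_2(1+\tilde{\alpha}) - \int_0^{\log_2(1+\tilde{\alpha})} F_{\mathrm{SINR}_F^d}(2^t-1)\,dt$, which accounts for the leading constant in \eqref{eq-NOMA-ER2}. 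For $\gamma < \tilde{\alpha}$, solving $\mathrm{SINR}_F^d < \gamma$ yields $X < \gamma/(b\rho(\alpha_2-\alpha_1\gamma))$, so the CDF of the SINR equals $F_X$ evaluated at this threshold; I then substitute the series form of $F_X$ from Lemma \ref{Lemm-Channel-CLT} and interchange the sum with the outer integral in $t$.

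What remains is the deterministic integral of $\gamma\!\left(i+\tfrac{1}{2},\,\tfrac{2^t-1}{2b\rho(\alpha_2-\alpha_1(2^t-1))}\right)$ over $t\in[0,\log_2(1+\tilde{\alpha})]$, which has no elementary antiderivative. The plan is a two-step change of variable: first $\gamma = 2^t - 1$ introduces a Jacobian $1/((1+\gamma)\ln 2)$ and converts the range to $\gamma\in[0,\tilde{\alpha}]$; then the affine map $\gamma = \tilde{\alpha}(1+t)/2$ relocates the interval to $t\in[-1,1]$. A direct algebraic check shows that under this composition the second argument of the incomplete gamma function simplifies exactly to $\tilde{\alpha}(1+t)/(4b\rho\alpha_2 - 2b\rho\alpha_1\tilde{\alpha}(1+t))$, while the surviving rational factor in $t$ collapses to $1/(1 + 2/\tilde{\alpha} + t)$. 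Multiplying and dividing by $\sqrt{1-t^2}$ puts the integrand into the canonical weight form, and applying the Chebyshev--Gauss quadrature $\int_{-1}^1 \phi(t)/\sqrt{1-t^2}\,dt \approx \sum_l (\pi/u_1)\,\phi(t_l)$ with nodes $t_l = \cos((2l-1)\pi/(2u_1))$ produces exactly the $\mathcal{J}_1(t_l)$ summand claimed in \eqref{eq-NOMA-ER2}.

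The principal obstacle is engineering the two substitutions so that the Jacobians and the SINR-to-$X$ inversion align cleanly, and so that the $\sqrt{1-t^2}$ factor required by the quadrature rule emerges naturally rather than forcing an ill-behaved integrand at the endpoints. The ``$\approx$'' sign in \eqref{eq-NOMA-ER2} then reflects two distinct sources of error: the CLT-based approximation for $F_X$ inherited from Lemma \ref{Lemm-Channel-CLT}, and the quadrature error, which vanishes as $u_1 \to \infty$.
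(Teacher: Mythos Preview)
Your proposal is correct and follows essentially the same route as the paper's proof: for $R_N^d$ both reduce to the identity $\int_0^\infty \ln(1+cy)e^{-y}\,dy=-e^{1/c}\mathrm{Ei}(-1/c)$, and for $R_F^d$ your tail-formula derivation followed by the substitution $\gamma=2^t-1$ lands on exactly the same integral $\frac{1}{\ln 2}\int_0^{\tilde{\alpha}}\frac{F_{\tilde{X}}(\tilde{x})}{1+\tilde{x}}\,d\tilde{x}$ that the paper obtains via integration by parts, after which the affine map to $[-1,1]$ and the Chebyshev--Gauss quadrature are identical.
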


\begin{proof}
See Appendix \ref{Appen-Theo-Downlink-ER}.
\end{proof}

\begin{remark}
The ER of F is accurate as the calculation is not related to $\int_{0}^{0^{+}}g(x)f_X(x)dx$.
Because of the principle of downlink NOMA, the ER of N increases for the increase of the SNR, while the ER of F approaches a ceiling when $\rho\rightarrow\infty$.
\end{remark}

To provide insight into the performance, the high-SNR slope is considered, which is defined as $S=\lim\limits_{\rho\rightarrow\infty}\frac{R(\rho)}{\log_2(\rho)}$ \cite{lozano2005high}.
To obtain it, the asymptotic expression for N's ER and the ceiling for F's ER are derived in the following proposition.
\begin{proposition}\label{Prop-Downlink-ER}
In the high-SNR regime, $R_N^d$ can be approximated as
\begin{equation}\label{eq-NOMA-ER3}
\begin{split}
R_N^{d,\infty}=\log_2(a\alpha_1 \rho)-\frac{E_c}{\ln(2)},
\end{split}
\end{equation}
where $E_c$ denotes the Euler constant.
On the other hand, the ceiling for $R_F^d$ in the high-SNR regime is given by
\begin{equation}\label{eq-NOMA-ER4}
\begin{split}
R_F^{d,\infty}=\log_2(1+\tilde{\alpha}).
\end{split}
\end{equation}
\end{proposition}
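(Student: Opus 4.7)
The plan is to treat the two asymptotic results independently, since they rely on different machinery: the approximation of $R_N^{d,\infty}$ is a Taylor-type expansion of the closed-form expression in \eqref{eq-NOMA-ER1}, whereas the ceiling $R_F^{d,\infty}$ is obtained by passing the limit inside the expectation in \eqref{eq-NOMA-ER}.

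For $R_N^{d,\infty}$, I would set $x = 1/(a\alpha_1\rho)$ in \eqref{eq-NOMA-ER1}, noting that $x\to 0^{+}$ as $\rho\to\infty$. Invoking the standard series representation
\begin{equation*}
-\mathrm{Ei}(-x) \;=\; -E_c - \ln(x) - \sum_{n=1}^{\infty}\frac{(-x)^n}{n\cdot n!},
\end{equation*}
together with $e^{x} = 1 + O(x)$, the dominant contribution to $R_N^d\,\ln(2)$ becomes $-E_c - \ln(x) = -E_c + \ln(a\alpha_1\rho)$, while all higher-order terms are $O(\rho^{-1}\ln\rho)$ and therefore vanish in the limit. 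Dividing through by $\ln 2$ yields \eqref{eq-NOMA-ER3} immediately.

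For the ceiling $R_F^{d,\infty}$, I would return to the definition \eqref{eq-NOMA-ER} and analyze the pointwise behavior of $\mathrm{SINR}_F^d$. Setting $U = |\mathbf{G}\mathbf{\Theta}\mathbf{g}|^2 d_{F1}^{-\alpha_G} d_{F2}^{-\alpha_g}$ and dividing numerator and denominator by $U$ (which is legitimate since $U>0$ almost surely by Lemmas \ref{Lemm-Channel-CLT} and \ref{Lemm-Channel-Appro}), I obtain
\begin{equation*}
\mathrm{SINR}_F^d \;=\; \frac{\alpha_2}{\alpha_1 + 1/(\rho U)} \;\nearrow\; \frac{\alpha_2}{\alpha_1} \;=\; \tilde{\alpha},
\end{equation*}
monotonically in $\rho$. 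Thus $\log_2(1+\mathrm{SINR}_F^d)$ converges pointwise up to the constant $\log_2(1+\tilde{\alpha})$ while being uniformly bounded by it, so the monotone (equivalently, bounded) convergence theorem justifies exchanging the limit with $\mathbb{E}[\cdot]$, producing \eqref{eq-NOMA-ER4}.

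The only subtle point, and hence the main (though modest) obstacle, is the limit-interchange step in the second part: it hinges on the almost-sure positivity of $U$, which is guaranteed because the distributions in Lemmas \ref{Lemm-Channel-CLT} and \ref{Lemm-Channel-Appro} place no mass at the origin. Beyond this, the argument is a routine exercise in asymptotic calculus.
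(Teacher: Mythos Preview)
Your proposal is correct and follows essentially the same approach as the paper: the paper likewise uses $e^{x}\to 1$ and the expansion $\mathrm{Ei}(-x)\approx \ln(x)+E_c$ as $x\to 0$ for \eqref{eq-NOMA-ER3}, and the pointwise limit $\lim_{\rho\to\infty}\frac{b\alpha_2 X}{b\alpha_1 X+1/\rho}=\tilde{\alpha}$ for \eqref{eq-NOMA-ER4}. Your version is slightly more careful in that you explicitly invoke monotone/bounded convergence to justify exchanging the limit with the expectation, a step the paper leaves implicit.
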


\begin{proof}
By using $\lim\limits_{x\rightarrow 0}e^{x} = 1$ and
$\mathrm{Ei}(-x)\approx\ln(x)+E_c$ for $x\rightarrow 0$ \cite[\textrm{eq}. (8.214.2)]{gradshteyn2007}, we can approximate \eqref{eq-NOMA-ER1} as \eqref{eq-NOMA-ER3} when $\rho\rightarrow\infty$.
On the other hand, since we have $\lim\limits_{\rho\rightarrow\infty}\frac{b\alpha_2 X}{b\alpha_1 X + 1/\rho}=\tilde{\alpha}$, we can approximate \eqref{eq-NOMA-ER} as \eqref{eq-NOMA-ER4} when $\rho\rightarrow\infty$. This completes the proof.
\end{proof}

\begin{corollary}\label{Coro-Downlink-ER}
In the considered IRS-aided NOMA network, the high-SNR slopes of N and F for the downlink are given by $\mathcal{S}_N^d=1$ and $\mathcal{S}_F^d=0$, respectively.
\end{corollary}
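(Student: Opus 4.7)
The plan is to apply the definition of the high-SNR slope, $\mathcal{S}=\lim_{\rho\to\infty}R(\rho)/\log_2(\rho)$, directly to the two asymptotic expressions already established in Proposition \ref{Prop-Downlink-ER}. Since Proposition \ref{Prop-Downlink-ER} does all of the analytic work (expanding $e^x$ around $0$ and invoking the asymptotic form of $\mathrm{Ei}(-x)$ for N, and taking the pointwise limit of the SINR for F), the corollary reduces to two short limit computations, with no additional bounds or analytic estimates needed.

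First I would handle user N. Substituting $R_N^{d,\infty}=\log_2(a\alpha_1\rho)-E_c/\ln(2)$ into the definition gives
\begin{equation}
\mathcal{S}_N^d=\lim_{\rho\to\infty}\frac{\log_2(a\alpha_1)+\log_2(\rho)-E_c/\ln(2)}{\log_2(\rho)}=1,
\end{equation}
since the constants $\log_2(a\alpha_1)$ and $E_c/\ln(2)$ are negligible relative to $\log_2(\rho)$ as $\rho\to\infty$. This matches the intuition that, after SIC, N sees an interference-free channel and therefore behaves like a standard single-user Rayleigh link whose ergodic rate scales as $\log_2(\rho)$.

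Next I would handle user F. Substituting the ceiling $R_F^{d,\infty}=\log_2(1+\tilde\alpha)$, which is independent of $\rho$, gives
\begin{equation}
\mathcal{S}_F^d=\lim_{\rho\to\infty}\frac{\log_2(1+\tilde\alpha)}{\log_2(\rho)}=0.
\end{equation}
The zero slope is the expected consequence of treating N's signal as interference in a fixed-power-allocation NOMA scheme: the intra-cell interference power scales proportionally with the signal power, so the SINR is bounded above by $\tilde\alpha=\alpha_2/\alpha_1$ and the ergodic rate cannot grow with $\rho$.

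There is essentially no obstacle here beyond invoking Proposition \ref{Prop-Downlink-ER}; the only thing to be careful about is to note explicitly that the approximation $R_F^d\approx\log_2(1+\mathrm{SINR}_F^d)$ used in \eqref{eq-NOMA-ER} remains valid in the high-SNR regime (the path-loss-driven argument in the preceding remark continues to apply, so $\min\{\mathrm{SINR}_{N,F}^d,\mathrm{SINR}_F^d\}\approx\mathrm{SINR}_F^d$), so the ceiling derived for $\mathrm{SINR}_F^d$ indeed governs F's ergodic rate and hence its high-SNR slope.
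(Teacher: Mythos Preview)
Your proposal is correct and essentially matches the paper's own argument: both invoke Proposition \ref{Prop-Downlink-ER} and reduce the corollary to a trivial computation on the asymptotic expressions. The only cosmetic difference is that the paper computes $\mathcal{S}_N^d$ via the derivative $\frac{dR_N^{d,\infty}}{d\log_2(\rho)}=\ln(2)\,\rho\,\frac{dR_N^{d,\infty}}{d\rho}=1$ rather than the limit of the ratio, but for an affine function of $\log_2(\rho)$ these are equivalent one-line calculations.
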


\begin{proof}
We have $\mathcal{S}_N^d=\frac{dR_N^{d,\infty}}{d\log_2(\rho)}=\ln(2)\rho\frac{dR_N^{d,\infty}}{d\rho}=1$.
This completes the proof.
\end{proof}

\begin{remark}
The high-SNR slope of F for the downlink NOMA is not related to the number of reflecting elements and Nakagami fading parameters.
\end{remark}

\subsection{OMA}\label{Subsec-OMA}
OMA can be regarded as a special case of NOMA.
The data rates of N and F  under the OMA scheme are given by $\frac{1}{2}\log_2\left(1+|h|^2d_N^{-\alpha_h}\rho\right)$
and $\frac{1}{2}\log_2\left(1+\beta^2\left(\sum_{k=1}^{K}|G_{k}||g_{k}|\right)^2d_{F1}^{-\alpha_G}d_{F2}^{-\alpha_g}\rho\right)$,
respectively. Here, we assume that each UD shares half of the resource block for a fair comparison.

\subsubsection{Outage Probability}
The OPs of N and F for the downlink are presented in the following theorem.
\begin{theorem}\label{Theo-OMA-OP}
When N and F are under the downlink OMA scheme, their OPs are given by
\begin{equation}\label{eq-OMA-OP1}
\begin{split}
\mathbb{P}_N^{do}=1-e^{-\frac{\tilde{\gamma}_N^o}{a\rho}},
\end{split}
\end{equation}
and
\begin{equation}\label{eq-OMA-OP2}
\begin{split}
\mathbb{P}_F^{do}\approx e^{-\frac{\lambda}{2}}\sum_{i=0}^{\infty}
\frac{\lambda^i\gamma\left(i+\frac{1}{2},\frac{\tilde{\gamma}_F^o}{2b\rho}\right)}{i!2^{i}\Gamma\left(i+\frac{1}{2}\right)},
\end{split}
\end{equation}
respectively,
where $\tilde{\gamma}_N^o=2^{2\tilde{R}_N}-1$ and $\tilde{\gamma}_F^o=2^{2\tilde{R}_F}-1$.
\end{theorem}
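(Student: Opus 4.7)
The plan is to exploit the fact that OMA splits the resource block into two orthogonal halves, so the outage events for N and F decouple and each reduces to a single CDF evaluation; the only new ingredient compared with the NOMA analysis of Theorem \ref{Theo-Downlink-OP} is the $\tfrac{1}{2}$ pre-log factor, which inflates the effective target SNRs from $\tilde{\gamma}_N = 2^{\tilde{R}_N}-1$ to $\tilde{\gamma}_N^o = 2^{2\tilde{R}_N}-1$ and similarly for F.

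First I would handle N. Starting from $\mathbb{P}_N^{do} = \mathrm{P_r}\bigl(\tfrac{1}{2}\log_2(1+|h|^2 a\rho) < \tilde{R}_N\bigr)$, rearranging the logarithm gives the equivalent event $|h|^2 < \tilde{\gamma}_N^o/(a\rho)$. With $Y=|h|^2$ exponentially distributed with unit mean (as used in the proof of Theorem \ref{Theo-Downlink-OP}), substituting into $F_Y(y)=1-e^{-y}$ yields \eqref{eq-OMA-OP1} directly.

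Second, I would handle F by rewriting the channel gain in terms of the normalized variable $X$ from Lemma \ref{Lemm-Channel-CLT}. Since $\bigl(\sum_{k=1}^{K}|G_k||g_k|\bigr)^2 = K(1-\xi)X$, the argument of the logarithm becomes $1+bX\rho$ with $b$ as defined in Theorem \ref{Theo-Downlink-OP}, and the outage condition $\tfrac{1}{2}\log_2(1+bX\rho) < \tilde{R}_F$ rearranges to $X < \tilde{\gamma}_F^o/(b\rho)$. Consequently $\mathbb{P}_F^{do} \approx F_X\bigl(\tilde{\gamma}_F^o/(b\rho)\bigr)$, and inserting the CLT-based CDF from Lemma \ref{Lemm-Channel-CLT} and matching the $x/2$ argument of the lower incomplete gamma produces \eqref{eq-OMA-OP2}.

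There is no genuine technical obstacle, but the one subtlety worth flagging is the $\approx$ symbol in \eqref{eq-OMA-OP2}: as noted in Remark \ref{Rema-1} and Remark \ref{Rema-3}, the CLT-based CDF loses fidelity for arguments near zero, which is precisely the regime that $\tilde{\gamma}_F^o/(2b\rho)$ enters as $\rho\to\infty$. For the theorem itself this is acceptable since the statement is only approximate, but for a subsequent high-SNR diversity analysis one would instead substitute the near-zero form from Lemma \ref{Lemm-Channel-Appro}, mirroring the treatment used in Proposition \ref{Prop-Downlink-OP}.
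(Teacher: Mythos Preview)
Your proposal is correct and follows essentially the same approach as the paper: rewrite each outage event as a threshold condition on $Y=|h|^2$ or $X$, then evaluate $F_Y$ and the CLT-based $F_X$ at $\tilde{\gamma}_N^o/(a\rho)$ and $\tilde{\gamma}_F^o/(b\rho)$, respectively. The paper's proof is simply a terser version of exactly these two steps, and your closing caveat about the $\approx$ at high SNR aligns with Remark~\ref{Rema-7} and the subsequent Proposition~\ref{Prop-OMA-OP}.
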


\begin{proof}
$\mathbb{P}_N^{do}$ and $\mathbb{P}_F^{do}$ can be derived as follows:
\begin{equation}
\begin{split}
\mathbb{P}_N^{do}=\mathrm{P_r}\left(a\rho Y<\tilde{\gamma}_N^o\right)=F_Y\left(\frac{\tilde{\gamma}_N^o}{a\rho}\right),
\end{split}
\end{equation}
and
\begin{equation}
\begin{split}
\mathbb{P}_F^{do}=\mathrm{P_r}\left(b\rho X<\tilde{\gamma}_F^o\right)\approx F_X\left(\frac{\tilde{\gamma}_F^o}{b\rho}\right).
\end{split}
\end{equation}
This completes the proof.
\end{proof}

\begin{remark}\label{Rema-7}
According to Remark \ref{Rema-1}, $\mathbb{P}_F^{do}$ is accurate in the low-SNR regime but is inaccurate in the high-SNR regime.
Because $\mathbb{P}_F^{do}$ for $\rho\rightarrow\infty$ is derived by using $\int_{0}^{0^{+}}f_X(x)dx$.
\end{remark}

To address the aforementioned problem, we use Lemma \ref{Lemm-Channel-Appro} and obtain the high-SNR approximation of $\mathbb{P}_F^{do}$ in the following proposition.
\begin{proposition}\label{Prop-OMA-OP}
In the high-SNR regime, $\mathbb{P}_N^{do}$ can be approximated as
\begin{equation}\label{eq-OMA-OP3}
\begin{split}
\mathbb{P}_N^{do,\infty}=\frac{\tilde{\gamma}_N^o}{a}\rho^{-1}.
\end{split}
\end{equation}
When $m_G\neq m_g$, the high-SNR approximation of $\mathbb{P}_F^{do}$ is given by
\begin{equation}\label{eq-OMA-OP4}
\begin{split}
\mathbb{P}_F^{do,\infty}=\frac{\tilde{m}^K\tilde{c}_2^{2m_sK}}{\Gamma(2m_sK+1)}\rho^{-m_sK},
\end{split}
\end{equation}
where $\tilde{c}_2=\sqrt{\frac{\tilde{\gamma}_F^o}{c}}$.
\end{proposition}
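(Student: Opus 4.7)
The plan is to follow exactly the two-step strategy used in the proof of Proposition \ref{Prop-Downlink-OP}, since the OMA outage conditions are simpler versions of the NOMA ones (no interference term). For the first part, I would Taylor-expand the exponential in \eqref{eq-OMA-OP1}: using $e^{-x}=1-x+O(x^{2})$ for $x\rightarrow 0^{+}$ and noting that $\tilde{\gamma}_{N}^{o}/(a\rho)\rightarrow 0$ as $\rho\rightarrow\infty$, the leading-order term is $\tilde{\gamma}_{N}^{o}/(a\rho)$, which gives \eqref{eq-OMA-OP3}.

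For the second part on F, Remark \ref{Rema-7} warns that the CLT-based statistics are unreliable at $x\rightarrow 0^{+}$, so I cannot directly expand \eqref{eq-OMA-OP2}. Instead I would re-express the outage event in terms of $Z=\sum_{k=1}^{K}|G_{k}||g_{k}|$ and use Lemma \ref{Lemm-Channel-Appro}. Observing that $bX=K\beta^{2}(1-\xi)d_{F1}^{-\alpha_{G}}d_{F2}^{-\alpha_{g}}\cdot Z^{2}/[K(1-\xi)]=cZ^{2}$, the outage event $b\rho X<\tilde{\gamma}_{F}^{o}$ becomes $Z<\sqrt{\tilde{\gamma}_{F}^{o}/c}\,\rho^{-1/2}=\tilde{c}_{2}\rho^{-1/2}$. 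In the high-SNR regime this threshold tends to $0^{+}$, so it is legitimate to substitute into $F_{Z}^{0^{+}}(\cdot)$ from \eqref{eq-Lemm-2}.

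Next I would reuse the double-series expansion \eqref{eq-4-2} obtained in the proof of Proposition \ref{Prop-Downlink-OP}, i.e., expand the lower incomplete gamma function and then the exponential as power series in $z$. The smallest power of $z$ in the resulting double sum is $z^{2m_{s}K}$, attained uniquely at $l=i=0$. Substituting $z=\tilde{c}_{2}\rho^{-1/2}$ and retaining only this leading-order term yields
\begin{equation*}
\mathbb{P}_{F}^{do,\infty}\approx\frac{\tilde{m}^{K}\tilde{c}_{2}^{2m_{s}K}}{\Gamma(2m_{s}K+1)}\rho^{-m_{s}K},
\end{equation*}
which is exactly \eqref{eq-OMA-OP4}.

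The hard part, such as it is, is conceptual rather than computational: one must resist applying the Taylor expansion directly to \eqref{eq-OMA-OP2} (which would yield the wrong diversity scaling per Remark \ref{Rema-7}) and instead recognise that the outage event in the high-SNR regime is governed by the small-argument regime of $Z$, for which Lemma \ref{Lemm-Channel-Appro} gives exact statistics. Beyond that, the argument is a verbatim repeat of the NOMA proof with $\tilde{c}_{1}$ replaced by $\tilde{c}_{2}$, and with $\tilde{\gamma}_{F}/(\alpha_{2}-\alpha_{1}\tilde{\gamma}_{F})$ replaced by $\tilde{\gamma}_{F}^{o}$; the quadratic dependence of the outage event on $Z$ is what converts a $\rho^{-1/2}$ threshold scaling into the $\rho^{-m_{s}K}$ outage scaling, confirming the diversity order $m_{s}K$ inherited from the NOMA case.
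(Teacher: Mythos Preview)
Your proposal is correct and follows essentially the same approach as the paper, which simply states that the proof is similar to that of Proposition~\ref{Prop-Downlink-OP}. You have spelled out precisely the adaptation that is implicit in the paper's one-line proof: Taylor-expanding the exponential for N, re-expressing the outage event for F in terms of $Z$ via $bX=cZ^{2}$, and extracting the leading term from the expansion \eqref{eq-4-2} with $\tilde{c}_{1}$ replaced by $\tilde{c}_{2}$.
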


\begin{proof}
Similar to the proof of Proposition \ref{Prop-Downlink-OP}.
\end{proof}

\begin{corollary}\label{Coro-OMA-OP}
When N and F are under the downlink OMA scheme, the diversity orders of N and F are given by $\mathcal{D}_N^{do}=1$ and $\mathcal{D}_F^{do}=m_sK$, respectively.
\end{corollary}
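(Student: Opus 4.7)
My plan is to read off both diversity orders directly from the high-SNR asymptotics already established in Proposition \ref{Prop-OMA-OP}, then handle the excluded edge case $m_G=m_g$ by a continuity/limit argument, mirroring the short proof of Corollary \ref{Coro-Downlink-OP}. Recall that the diversity order is defined by
\begin{equation}
\mathcal{D} = -\lim_{\rho\to\infty}\frac{\log \mathbb{P}(\rho)}{\log \rho},
\end{equation}
so it is simply the exponent of $\rho$ in the leading-order term of the OP.

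First, for N: Proposition \ref{Prop-OMA-OP} gives $\mathbb{P}_N^{do,\infty}=\tilde{\gamma}_N^o\rho^{-1}/a$, a single monomial in $\rho$. Plugging this into the definition above yields $\mathcal{D}_N^{do}=1$ in one line, independent of the IRS parameters (as expected, since N communicates via the direct BS-N Rayleigh channel which does not benefit from the IRS at all).

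Second, for F in the generic case $m_G\neq m_g$: Proposition \ref{Prop-OMA-OP} gives $\mathbb{P}_F^{do,\infty}$ proportional to $\rho^{-m_sK}$, with a prefactor that does not depend on $\rho$. Again the definition of diversity order immediately returns $\mathcal{D}_F^{do}=m_sK$.

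The main obstacle is the degenerate case $m_G=m_g$, since Lemma \ref{Lemm-Channel-Appro} was derived under the assumption $m_G\neq m_g$ (its constant $\tilde{m}$ involves $\Gamma(2m_l-2m_s)$, which diverges as $m_l\to m_s$, signalling merged poles in the underlying Laplace inversion). The approach I would take here is to argue by continuity in the fading parameters: the random variable $Z=\sum_{k=1}^{K}|G_k||g_k|$ depends continuously on $(m_G,m_g)$, and therefore so does its CDF near zero; in particular, its polynomial order at the origin, which is what determines the diversity exponent, is $2m_sK$ on the open set $m_G\neq m_g$ and must extend to the boundary. Equivalently, one can redo the Laplace-transform calculation of Appendix B with equal parameters, where the inversion now produces a prefactor $z^{2m_sK-1}$ multiplied by a polylogarithmic correction that does not alter the $\rho^{-m_sK}$ scaling. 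Either route confirms $\mathcal{D}_F^{do}=m_sK$ in the coincident case as well, completing the corollary.
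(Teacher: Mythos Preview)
Your proposal is correct and takes essentially the same approach as the paper: the paper states this corollary without an explicit proof, but the intended argument is exactly the one-line read-off from Proposition~\ref{Prop-OMA-OP} together with the limit argument for $m_G=m_g$ that you borrowed from Corollary~\ref{Coro-Downlink-OP}. Your write-up is in fact more detailed than anything the paper provides for this corollary.
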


\begin{remark}
The diversity order of F for the downlink OMA is affected by the number of reflecting elements and Nakagami fading parameters. Moreover, the diversity orders of N and F for the downlink OMA are the same as these for the downlink NOMA, respectively.
\end{remark}

\subsubsection{Ergodic Rate}
The ERs of N and F for OMA are presented in the following theorem.
\begin{theorem}\label{Theo-OMA-ER}
When N and F are under the downlink OMA scheme, their ERs are, respectively, given by
\begin{equation}\label{eq-OMA-ER1}
\begin{split}
R_N^{do}=-\frac{e^{\frac{1}{a \rho}}}{2\ln(2)}\mathrm{Ei}\left(-\frac{1}{a \rho}\right),
\end{split}
\end{equation}
and
\begin{equation}\label{eq-OMA-ER2}
\begin{split}
R_F^{do}\approx \frac{\lambda^{\frac{1}{4}}}{4}e^{-\frac{\lambda}{2}}
\sum_{l=1}^{u_2}\omega_{2,l}\mathcal{J}_2(x_{2,l}),
\end{split}
\end{equation}
where $u_2$ is the number of nodes for the Gauss-Laguerre quadrature and determines the approximation precision, $x_{2,l}$ is the $l$th root of Laguerre polynomial $L_{u_2}(x)$, $\omega_{2,l}=\frac{x_{2,l}}{(u_2+1)^2\left(L_{u_2+1}(x_{2,l})\right)^2}$ is the weight, and
\begin{equation}
\begin{split}
\mathcal{J}_2(x)=x^{-\frac{1}{4}}
e^{\frac{x}{2}}\log_2\left(1+b\rho x\right)I_{-\frac{1}{2}}\left(\sqrt{\lambda x}\right).
\end{split}
\end{equation}
\end{theorem}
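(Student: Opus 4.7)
The plan is to reduce each ergodic rate to a single integral against the PDF of the relevant random variable, then either recognize it as a standard tabulated integral or approximate it with an appropriate Gaussian quadrature.

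For $R_N^{do}$, I would start from $R_N^{do}=\tfrac{1}{2}\mathbb{E}\bigl[\log_2(1+a\rho Y)\bigr]$ with $Y=|h|^2$, using that $Y$ is unit-exponential (as recorded in the proof of Theorem \ref{Theo-Downlink-OP}). This yields
\begin{equation}
R_N^{do}=\frac{1}{2\ln 2}\int_0^\infty \ln(1+a\rho y)\,e^{-y}\,dy.
\end{equation}
Invoking the standard identity $\int_0^\infty \ln(1+by)e^{-y}dy=-e^{1/b}\,\mathrm{Ei}(-1/b)$ (from \cite{gradshteyn2007}) with $b=a\rho$ gives \eqref{eq-OMA-ER1} immediately. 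This half is purely a table look-up; no obstacle.

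For $R_F^{do}$, I would first note that by definition of $X$ and $b$ in Theorem \ref{Theo-Downlink-OP}, $\beta^2\bigl(\sum_k|G_k||g_k|\bigr)^2 d_{F1}^{-\alpha_G}d_{F2}^{-\alpha_g}=bX$, so
\begin{equation}
R_F^{do}=\frac{1}{2}\int_0^\infty \log_2(1+b\rho x)\,f_X(x)\,dx.
\end{equation}
Substituting the Bessel-form PDF from Lemma \ref{Lemm-Channel-CLT}, the integral becomes
\begin{equation}
R_F^{do}=\frac{\lambda^{1/4}}{4}e^{-\lambda/2}\int_0^\infty x^{-1/4}\log_2(1+b\rho x)\,I_{-1/2}\!\left(\sqrt{\lambda x}\right)e^{-x/2}\,dx.
\end{equation}
The integrand has no elementary antiderivative (the $\log$ times Bessel factor will not close), so I would turn to Gauss--Laguerre quadrature, whose weight function is $e^{-x}$ on $[0,\infty)$. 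The trick is to rewrite $e^{-x/2}=e^{x/2}e^{-x}$ and absorb $e^{x/2}$ into the integrand; then $R_F^{do}=\tfrac{\lambda^{1/4}}{4}e^{-\lambda/2}\int_0^\infty \mathcal{J}_2(x)e^{-x}dx$ with $\mathcal{J}_2$ exactly as stated in the theorem. Applying the $u_2$-node quadrature with nodes $x_{2,l}$ (roots of $L_{u_2}$) and weights $\omega_{2,l}=x_{2,l}/[(u_2+1)^2 L_{u_2+1}(x_{2,l})^2]$ produces \eqref{eq-OMA-ER2}.

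The only delicate point is picking the quadrature rule: Chebyshev--Gauss (as used in Theorem \ref{Theo-Downlink-ER}) is unsuitable here because the domain is $[0,\infty)$ and the integrand does not admit the algebraic singularity of the Chebyshev weight, so Gauss--Laguerre is the natural choice, and the reshaping $e^{-x/2}\to e^{x/2}\cdot e^{-x}$ is the one non-routine manoeuvre. Everything else (recognising the exponential PDF, invoking the $\mathrm{Ei}$ identity, bookkeeping of constants $\lambda^{1/4}/4$ and $e^{-\lambda/2}$) is mechanical. I would conclude the proof once both expressions are in hand.
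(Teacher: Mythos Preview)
Your proposal is correct and matches the paper's approach essentially step for step: for N the paper also reduces to the single integral $\int_0^\infty e^{-y}/(y+\tfrac{1}{a\rho})\,dy$ (via an integration by parts that your direct PDF integral is equivalent to) and applies the exponential-integral identity, while for F it writes $R_F^{do}\approx\tfrac12\int_0^\infty\log_2(1+b\rho x)f_X(x)\,dx$, substitutes the Bessel-form PDF, and applies Gauss--Laguerre quadrature exactly as you describe. The $e^{-x/2}=e^{x/2}\cdot e^{-x}$ reshaping you flag as the one non-routine step is precisely what underlies the appearance of $e^{x/2}$ in $\mathcal{J}_2$.
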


\begin{proof}
The analysis for N is similar to that in the proof of Theorem \ref{Theo-Downlink-ER}.
For the ER of F, it can be expressed as
\begin{equation}
\begin{split}
R_F^{do}\approx \int_0^{\infty}\frac{1}{2}\log_2(1+b\rho x)f_X(x)dx
=\frac{\lambda^{\frac{1}{4}}}{4}e^{-\frac{\lambda}{2}}\underbrace{\int_{0}^{\infty}x^{-\frac{1}{4}}
e^{-\frac{x}{2}}\log_2\left(1+b\rho x\right)I_{-\frac{1}{2}}\left(\sqrt{\lambda x}\right)dx}_{J_2}.
\end{split}
\end{equation}
Next, $J_2$ can be approximated by using the Gauss-Laguerre quadrature. As such, we have
\begin{equation}
\begin{split}
J_2\simeq \sum_{l=1}^{u_2}\omega_{2,l}\mathcal{J}_2(x_{2,l}).
\end{split}
\end{equation}
This completes the proof.
\end{proof}

\begin{remark}
The ER of F is accurate as the calculation is not related to $\int_{0}^{0^{+}}g(x)f_X(x)dx$.
On the other hand, both ERs of N and F increase as the SNR increases.
\end{remark}

Then, the approximations in the high-SNR regime are given in the following proposition.
\begin{proposition}\label{Prop-OMA-ER}
In the high-SNR regime, the approximation of $R_N^{do}$ is given by
\begin{equation}\label{eq-OMA-ER3}
\begin{split}
R_N^{do,\infty}=\frac{1}{2}\left(\log_2(a\rho)-\frac{E_c}{\ln(2)}\right),
\end{split}
\end{equation}
and the tight upper bound of $R_F^{do}$ is approximated as
\begin{equation}\label{eq-OMA-ER4}
\begin{split}
R_F^{do,\infty}=\frac{1}{2}\log_2\left(b\rho (1+\lambda)\right).
\end{split}
\end{equation}
\end{proposition}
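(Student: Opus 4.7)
The plan is to handle the two high-SNR expressions separately, reusing the machinery of Proposition \ref{Prop-Downlink-ER} for the near user and invoking a Jensen-type upper bound for the far user.

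For $R_N^{do,\infty}$, I observe that the closed form in \eqref{eq-OMA-ER1} has exactly the same structure as \eqref{eq-NOMA-ER1}: the only differences are a global factor of $1/2$ (OMA resource split) and the replacement $\alpha_1\to 1$ (full BS power in N's slot). I would therefore reuse the same two limits invoked in Proposition \ref{Prop-Downlink-ER}, namely $e^{1/(a\rho)}\to 1$ and $\mathrm{Ei}(-1/(a\rho))\sim \ln(1/(a\rho))+E_c$ as $\rho\to\infty$ \cite[\textrm{eq}. (8.214.2)]{gradshteyn2007}. Substituting these into \eqref{eq-OMA-ER1} and converting from $\ln$ to $\log_2$ gives $-\mathrm{Ei}(-1/(a\rho))/(2\ln 2)\sim \tfrac{1}{2}\log_2(a\rho)-E_c/(2\ln 2)$, which is \eqref{eq-OMA-ER3}.

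For the far user, the starting point is the exact relation $R_F^{do}=\tfrac{1}{2}\mathbb{E}[\log_2(1+b\rho X)]$. Since $x\mapsto \log_2(1+b\rho x)$ is concave, Jensen's inequality yields the upper bound $R_F^{do}\le \tfrac{1}{2}\log_2(1+b\rho\,\mathbb{E}[X])$. The remaining ingredient is $\mathbb{E}[X]$. By Lemma \ref{Lemm-Channel-CLT}, $X\,\dot\thicksim\,\chi_1^{'2}(\lambda)$, so $\mathbb{E}[X]=1+\lambda$. As a sanity check one can also verify this directly from the definition $X=(\sum_k|G_k||g_k|)^2/(K(1-\xi))$: using $\mathbb{E}[|G_k|^2]=\mathbb{E}[|g_k|^2]=1$ together with $(\mathbb{E}[|G_k|]\mathbb{E}[|g_k|])^2=\xi$ gives $\mathbb{E}[(\sum_k|G_k||g_k|)^2]=K^2\xi+K(1-\xi)$, whence $\mathbb{E}[X]=K\xi/(1-\xi)+1=\lambda+1$. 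Therefore $R_F^{do}\le \tfrac{1}{2}\log_2(1+b\rho(1+\lambda))$, and in the high-SNR regime the additive $1$ is dominated by $b\rho(1+\lambda)$, producing \eqref{eq-OMA-ER4}.

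The main obstacle is justifying that this Jensen bound is indeed tight in the high-SNR regime, as advertised by the proposition, rather than merely an upper estimate. The argument I would rely on is that $\log_2(1+b\rho X)=\log_2(b\rho)+\log_2(X)+o(1)$ for large $\rho$, so the SNR-dependent leading term $\log_2(b\rho)$ is captured exactly by the bound; the residual gap between $\mathbb{E}[\log_2 X]$ and $\log_2\mathbb{E}[X]$ is an $O(1)$ constant that shrinks as $K$ grows, since the CLT concentration asserted in Lemma \ref{Lemm-Channel-CLT} makes $X$ sharply peaked about its mean (standard deviation $O(\sqrt K)$ versus mean $O(K)$). Hence both the high-SNR slope and the leading offset are recovered, and only sub-leading constants are sacrificed; no deeper technicality is needed.
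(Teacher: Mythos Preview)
Your proposal is correct and follows essentially the same route as the paper: for N you reuse the exponential-integral asymptotics from Proposition~\ref{Prop-Downlink-ER} with the substitutions $\alpha_1\to 1$ and the extra factor $1/2$, and for F you apply Jensen's inequality to the concave map $x\mapsto\log_2(1+b\rho x)$ together with $\mathbb{E}[X]=1+\lambda$ from the noncentral chi-square law, then drop the additive $1$ at high SNR. The only notable difference is that the paper justifies the tightness of the Jensen bound by appealing to simulation results, whereas you offer an analytical heuristic (the deterministic $\log_2(b\rho)$ leading term plus CLT concentration of $X$ about its mean); your argument is a reasonable strengthening but is not required by the paper's own standard of proof.
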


\begin{proof}
The analysis for N is similar to that in the proof of Proposition \ref{Prop-Downlink-ER}.
For F, we have $\mathbb{E}(X)=1+\lambda$ according to the property of the noncentral chi-square distribution.
Then, since $\frac{1}{2}\log_2\left(1+b\rho X\right)$ with respect to $X$ is concave, with the aid of Jensen's inequality \cite{liu2017non}, we have
\begin{equation}
\begin{split}
R_F^{do}&=\mathbb{E}\left(\frac{1}{2}\log_2\left(1+b\rho X\right)\right)\le \frac{1}{2}\log_2\left(1+b\rho\mathbb{E}(X)\right)
=\frac{1}{2}\log_2\left(1+b\rho (1+\lambda)\right).
\end{split}
\end{equation}
When $\rho\rightarrow\infty$, we can obtain \eqref{eq-OMA-ER4}.
Furthermore, simulation results show that the upper bound is tight.
This completes the proof.
\end{proof}

\begin{corollary}\label{Coro-OMA-ER}
When N and F are under the downlink OMA scheme, their high-SNR slopes are given by $\mathcal{S}_N^{do}=0.5$ and $\mathcal{S}_F^{do}=0.5$, respectively.
\end{corollary}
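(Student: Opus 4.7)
The plan is to invoke Proposition \ref{Prop-OMA-ER} to obtain closed-form high-SNR surrogates for both $R_N^{do}$ and $R_F^{do}$, and then apply the definition $S=\lim\limits_{\rho\rightarrow\infty}\frac{R(\rho)}{\log_2(\rho)}$ directly, exactly as was done for the NOMA case in Corollary \ref{Coro-Downlink-ER}. Since Proposition \ref{Prop-OMA-ER} already reduces both rates to the form ``$\tfrac{1}{2}\log_2(\rho)$ plus a $\rho$-independent constant (in the limit)'', the corollary should follow in a couple of lines.

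For N, I would start from $R_N^{do,\infty}=\tfrac{1}{2}\bigl(\log_2(a\rho)-E_c/\ln(2)\bigr)$, split $\log_2(a\rho)=\log_2(a)+\log_2(\rho)$, divide by $\log_2(\rho)$, and take $\rho\to\infty$ so that the bounded terms $\tfrac{1}{2}\bigl(\log_2(a)-E_c/\ln(2)\bigr)/\log_2(\rho)$ vanish, leaving $\mathcal{S}_N^{do}=1/2$. Equivalently, one can differentiate with respect to $\log_2(\rho)$ as in the proof of Corollary \ref{Coro-Downlink-ER}, since $\tfrac{dR_N^{do,\infty}}{d\log_2(\rho)}=\ln(2)\rho\tfrac{dR_N^{do,\infty}}{d\rho}=1/2$.

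For F, I would take the tight upper bound $R_F^{do,\infty}=\tfrac{1}{2}\log_2\bigl(b\rho(1+\lambda)\bigr)$ from Proposition \ref{Prop-OMA-ER}, write it as $\tfrac{1}{2}\log_2(\rho)+\tfrac{1}{2}\log_2\bigl(b(1+\lambda)\bigr)$, and again divide by $\log_2(\rho)$ before sending $\rho\to\infty$ to obtain $\mathcal{S}_F^{do}=1/2$. Note that $b=K\beta^2(1-\xi)d_{F1}^{-\alpha_G}d_{F2}^{-\alpha_g}$ and $\lambda=K\xi/(1-\xi)$ are both independent of $\rho$, so their contribution disappears in the limit, which already anticipates the remark that the slope is insensitive to $K$, $m_G$, and $m_g$.

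The only mild subtlety is that $R_F^{do,\infty}$ is an upper bound rather than an exact asymptote, so strictly speaking one should also argue a matching lower bound to conclude the slope. The cleanest way is to note $R_F^{do}=\mathbb{E}\bigl[\tfrac{1}{2}\log_2(1+b\rho X)\bigr]\ge\tfrac{1}{2}\log_2(1+b\rho\,\epsilon)\,\mathrm{P_r}(X\ge\epsilon)$ for any fixed $\epsilon>0$ with $\mathrm{P_r}(X\ge\epsilon)>0$, which already scales like $\tfrac{1}{2}\log_2(\rho)$; combined with the Jensen upper bound this sandwiches the slope at $1/2$. I expect this sandwich argument to be the only non-cosmetic step, and it can be mentioned briefly since the authors have already noted the tightness of the Jensen bound via simulation.
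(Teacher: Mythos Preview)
Your treatment of N is identical to the paper's approach: invoke Proposition~\ref{Prop-OMA-ER} and differentiate (or divide by $\log_2\rho$). For F, your proof is actually \emph{stronger} than the paper's. The paper merely observes that $R_F^{do,\infty}$ is an upper bound and therefore gives only an upper bound of $0.5$ on the slope, then closes the gap by appealing to the simulation results (``$R_F^{do,\infty}$ is tight, which can be validated by simulations''). You instead supply an analytical lower bound to sandwich the slope, which is the mathematically preferable route.

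One small correction to your lower-bound step: the inequality $R_F^{do}\ge\tfrac{1}{2}\log_2(1+b\rho\epsilon)\,\mathrm{P_r}(X\ge\epsilon)$ yields, after dividing by $\log_2\rho$ and letting $\rho\to\infty$, the bound $\liminf_{\rho\to\infty}R_F^{do}/\log_2\rho\ge\tfrac{1}{2}\,\mathrm{P_r}(X\ge\epsilon)$, not $\tfrac{1}{2}$ directly. You must then send $\epsilon\to 0^+$ and use $\mathrm{P_r}(X>0)=1$ (immediate since $X$ is a positive continuous random variable) to conclude. Alternatively, the one-line bound $R_F^{do}\ge\tfrac{1}{2}\mathbb{E}[\log_2(b\rho X)]=\tfrac{1}{2}\log_2(b\rho)+\tfrac{1}{2}\mathbb{E}[\log_2 X]$, with $\mathbb{E}[\log_2 X]$ finite for the noncentral chi-square law, avoids the two-limit argument entirely. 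Either way your sandwich closes and the corollary follows; this is a genuine improvement over the paper's reliance on numerical validation.
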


\begin{proof}
Since $R_F^{do,\infty}$ is the upper bound in the high-SNR regime, we can only obtain the upper bound of the high-SNR slope of F, which is $0.5$.
However, as $R_F^{do,\infty}$ is tight, which can be validated by simulations in the following, the high-SNR slope of F can achieve its upper bound.
This completes the proof.
\end{proof}

\begin{remark}
The high-SNR slope of N for the downlink OMA is half of that for the downlink NOMA. On the other hand, the high-SNR slope of F in downlink OMA networks, which is not related to the number of reflecting elements and Nakagami fading parameters, is larger than that in downlink NOMA networks.
\end{remark}

\section{Performance Analysis for Uplink Transmission}
In this section, we will investigate the performance of the uplink transmission.

\subsection{NOMA}
\subsubsection{Outage Probability}
The OPs of N and F are given by
\begin{equation}
\begin{split}
\mathbb{P}_N^u=\mathrm{P_r}\left(\mathrm{SINR}_N^u < \tilde{\gamma}_N\right),
\end{split}
\end{equation}
and
\begin{equation}
\begin{split}
\mathbb{P}_F^u=1 - \mathrm{P_r}\left(\mathrm{SINR}_N^u \ge \tilde{\gamma}_N, \mathrm{SNR}_F^u \ge \tilde{\gamma}_F\right),
\end{split}
\end{equation}
respectively.

Then, the closed-form expressions for OPs can be derived, and the results are shown in the following theorem.
\begin{theorem}\label{Theo-Uplink-OP}
In the considered IRS-aided NOMA network, the OPs of N and F for the uplink are given by
\begin{equation}\label{eq-Up-NOMA-OP1}
\begin{split}
\mathbb{P}_N^u \approx 1- e^{-\frac{\tilde{\gamma}_N}{a\rho'}-\frac{\lambda}{2}}  \sum_{i=0}^{\infty}\frac{\lambda^i}{i!2^{2i+\frac{1}{2}}\left(\frac{b\tilde{\gamma}_N}{a} +\frac{1}{2}\right)^{i+\frac{1}{2}}},
\end{split}
\end{equation}
and
\begin{equation}\label{eq-Up-NOMA-OP2}
\begin{split}
\mathbb{P}_F^u \approx 1- e^{-\frac{\tilde{\gamma}_N}{a\rho'}-\frac{\lambda}{2}}  \sum_{i=0}^{\infty}\frac{\lambda^i\Gamma\left(i+\frac{1}{2},\frac{\tilde{\gamma}_N\tilde{\gamma}_F}{a\rho'}
+\frac{\tilde{\gamma}_F}{2b\rho'}\right)}{i!2^{2i+\frac{1}{2}}\left(\frac{b\tilde{\gamma}_N}{a} +\frac{1}{2}\right)^{i+\frac{1}{2}}\Gamma\left(i+\frac{1}{2}\right)},
\end{split}
\end{equation}
respectively, where $\Gamma(\cdot,\cdot)$ is the upper incomplete gamma function.
\end{theorem}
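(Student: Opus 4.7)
The plan is to reduce both outage events to integrals involving the exponential law of $Y=|h|^2$ and the noncentral chi-square law of $X$ from Lemma~\ref{Lemm-Channel-CLT}, then evaluate those integrals termwise using the series representation of $f_X(x)$.  Writing $a=d_N^{-\alpha_h}$ and $b=K\beta^2(1-\xi)d_{F1}^{-\alpha_G}d_{F2}^{-\alpha_g}$ as before, I note that \eqref{eq-IRS-Link} and the definition of $X$ give $|\mathbf{G}\mathbf{\Theta}\mathbf{g}|^2 d_{F1}^{-\alpha_G}d_{F2}^{-\alpha_g}=bX$, so \eqref{eq-UP-SNR1} and \eqref{eq-UP-SNR2} become $\mathrm{SINR}_N^u=\frac{aY}{bX+1/\rho'}$ and $\mathrm{SNR}_F^u=bX\rho'$.

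For $\mathbb{P}_N^u$, I would rewrite the outage condition as $aY<\tilde{\gamma}_N(bX+1/\rho')$, condition on $X$, and use $F_Y(y)=1-e^{-y}$ to obtain
\[
\mathbb{P}_N^u=1-e^{-\tilde{\gamma}_N/(a\rho')}\int_0^{\infty}e^{-\tfrac{b\tilde{\gamma}_N}{a}x}f_X(x)\,dx.
\]
I would then substitute the series form of $f_X(x)$ from Lemma~\ref{Lemm-Channel-CLT}, swap sum and integral, and recognize each resulting $x$-integral as a gamma integral $\int_0^{\infty}x^{i-1/2}e^{-(b\tilde{\gamma}_N/a+1/2)x}dx=\Gamma(i+1/2)/(b\tilde{\gamma}_N/a+1/2)^{i+1/2}$.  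The $\Gamma(i+1/2)$ then cancels the one in the series denominator, yielding \eqref{eq-Up-NOMA-OP1}.

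For $\mathbb{P}_F^u$, the joint event $\{\mathrm{SINR}_N^u\ge\tilde{\gamma}_N,\mathrm{SNR}_F^u\ge\tilde{\gamma}_F\}$ translates to $\{Y\ge\tilde{\gamma}_N(bX+1/\rho')/a\}\cap\{X\ge\tilde{\gamma}_F/(b\rho')\}$.  Conditioning on $X$ and using $\mathrm{P_r}(Y\ge y)=e^{-y}$ gives
\[
\mathbb{P}_F^u=1-e^{-\tilde{\gamma}_N/(a\rho')}\int_{\tilde{\gamma}_F/(b\rho')}^{\infty}e^{-\tfrac{b\tilde{\gamma}_N}{a}x}f_X(x)\,dx.
\]
Inserting the series for $f_X$ and performing the same change of variable $s=(b\tilde{\gamma}_N/a+1/2)x$, the lower limit becomes $\tilde{\gamma}_N\tilde{\gamma}_F/(a\rho')+\tilde{\gamma}_F/(2b\rho')$, and each inner integral becomes an upper incomplete gamma function $\Gamma(i+1/2,\,\cdot)/(b\tilde{\gamma}_N/a+1/2)^{i+1/2}$, producing \eqref{eq-Up-NOMA-OP2}.

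The main subtlety, and the step I expect to spend the most care on, is the interchange of the infinite series and the (possibly lower-truncated) integral: for the $F$-user case the lower limit is positive so absolute convergence is immediate from the exponential damping, while for the $N$-user case one must justify the swap on $[0,\infty)$ by observing that the term $x^{i-1/2}$ is integrable near $0$ for every $i\ge 0$ and that the resulting series of gamma integrals converges (after the $\Gamma(i+1/2)$ cancellation the tail behaves like $\lambda^i/(i!\,2^{2i})$ times a geometric factor, well within the radius of convergence).  The ``$\approx$'' in both equations then simply reflects the CLT-based approximation already inherited from Lemma~\ref{Lemm-Channel-CLT}, and no further approximation is introduced in the evaluation itself.
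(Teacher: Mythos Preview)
Your proposal is correct and follows essentially the same route as the paper: condition on $X$, integrate out the exponential $Y$, substitute the series form of $f_X$ from Lemma~\ref{Lemm-Channel-CLT}, and evaluate the resulting $x$-integrals as complete (for N) and upper incomplete (for F) gamma functions. The paper proceeds identically, citing \cite[eqs.\ (3.381.3)--(3.381.4)]{gradshteyn2007} for those integrals; your extra attention to justifying the sum--integral interchange is a welcome rigor the paper omits, but the argument is otherwise the same.
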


\begin{proof}
The OP of N can be transformed into
\begin{equation}
\begin{split}
\mathbb{P}_N^u&=\mathrm{P_r}\left(\frac{aY}{bX+ 1/\rho'} < \tilde{\gamma}_N\right)
\approx \int_{0}^{\infty}\int_{0}^{\frac{b\tilde{\gamma}_N x}{a}+\frac{\tilde{\gamma}_N}{a\rho'}}f_{Y}(y)dy f_X(x)dx\\
&=1-e^{-\frac{\tilde{\gamma}_N}{a\rho'}-\frac{\lambda}{2}}
\sum_{i=0}^{\infty}\frac{\lambda^i}{i!2^{2i+\frac{1}{2}}\Gamma\left(i+\frac{1}{2}\right)}\int_{0}^{\infty} e^{-\left(\frac{b\tilde{\gamma}_N}{a} +\frac{1}{2}\right)x} x^{i-\frac{1}{2}}dx.
\end{split}
\end{equation}
Furthermore, by referring to \cite[\textrm{eq}. (3.381.4)]{gradshteyn2007}, \eqref{eq-Up-NOMA-OP1} can be derived.
Similarly, the OP of F can be first transformed into
\begin{equation}
\begin{split}
\mathbb{P}_F^u&=1-\mathrm{P_r}\left(\frac{aY}{bX+1/\rho'} \ge \tilde{\gamma}_N,b\rho' X\ge \tilde{\gamma}_F\right)
\approx 1-\int_{\frac{\tilde{\gamma}_F}{b\rho'}}^{\infty}\int_{\frac{b\tilde{\gamma}_N x}{a}+\frac{\tilde{\gamma}_N}{a\rho'}}^{\infty}f_{Y}(y)dy f_X(x)dx\\
&=1-e^{-\frac{\tilde{\gamma}_N}{a\rho'}-\frac{\lambda}{2}}  \sum_{i=0}^{\infty}\frac{\lambda^i}{i!2^{2i+\frac{1}{2}}\Gamma\left(i+\frac{1}{2}\right)}\int_{\frac{\tilde{\gamma}_F}{b\rho'}}^{\infty} e^{-\left(\frac{b\tilde{\gamma}_N}{a} +\frac{1}{2}\right)x} x^{i-\frac{1}{2}}dx.
\end{split}
\end{equation}
Then, by referring to \cite[\textrm{eq}. (3.381.3)]{gradshteyn2007}, we can derive the closed-form expression as \eqref{eq-Up-NOMA-OP2}.
This completes the proof.
\end{proof}

\begin{remark}\label{Rema-10}
The OP of N is accurate since it does not involve the integral of $\int_{0}^{0^{+}}g(x)f_X(x)dx$.
Although the calculation of F's OP is related to $\int_{0}^{0^{+}}g(x)f_X(x)dx$, it only has a small bias in the medium-SNR regime.
This is because F's OP must be higher than N's OP and converges to a floor in the high-SNR regime with N's OP due to the principle of uplink NOMA.
\end{remark}

The floor for OPs of both UDs is presented in the following proposition.
\begin{proposition}\label{Prop-Uplink-OP}
In the high-SNR regime, $\mathbb{P}_N^u$ and $\mathbb{P}_F^u$ approach the same floor as
\begin{equation}\label{eq-Up-NOMA-OP3}
\begin{split}
\mathbb{P}_{both}^{u,\infty}=1- e^{-\frac{\lambda}{2}} \sum_{i=0}^{\infty}\frac{\lambda^i}{i!2^{2i+\frac{1}{2}}\left(\frac{b\tilde{\gamma}_N}{a} +\frac{1}{2}\right)^{i+\frac{1}{2}}}.
\end{split}
\end{equation}
\end{proposition}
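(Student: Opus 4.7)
The plan is to read the floor directly off the closed-form expressions \eqref{eq-Up-NOMA-OP1} and \eqref{eq-Up-NOMA-OP2} established in Theorem \ref{Theo-Uplink-OP}, by letting $\rho'\to\infty$ and identifying the two limits with the right-hand side of \eqref{eq-Up-NOMA-OP3}. The key observation is that every $\rho'$-dependent quantity in those two formulas either sits in a prefactor whose limit is $1$, or inside an upper incomplete gamma function whose argument shrinks to $0^+$; after taking limits, the $i$th term of the series for $\mathbb{P}_F^u$ collapses onto the $i$th term of the series for $\mathbb{P}_N^u$.

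For $\mathbb{P}_N^u$, the entire $\rho'$-dependence in \eqref{eq-Up-NOMA-OP1} is carried by the single factor $e^{-\tilde{\gamma}_N/(a\rho')}$ outside the sum. I would simply note that $\tilde{\gamma}_N/(a\rho')\to 0$ and hence $e^{-\tilde{\gamma}_N/(a\rho')}\to 1$, which immediately produces the expression on the right of \eqref{eq-Up-NOMA-OP3}.

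For $\mathbb{P}_F^u$, I would handle the two $\rho'$-dependent pieces in \eqref{eq-Up-NOMA-OP2} separately: the same outer exponential prefactor, and the upper incomplete gamma function $\Gamma\!\left(i+\tfrac{1}{2},\,\tilde{\gamma}_N\tilde{\gamma}_F/(a\rho')+\tilde{\gamma}_F/(2b\rho')\right)$ inside each summand. The prefactor converges to $1$ as before. The argument of the upper incomplete gamma function tends to $0^+$, and using $\Gamma(s,0)=\Gamma(s)$ the ratio $\Gamma(i+\tfrac{1}{2},\cdot)/\Gamma(i+\tfrac{1}{2})$ goes to $1$ termwise, so the denominator $\Gamma(i+\tfrac{1}{2})$ in the general term cancels cleanly and the series becomes identical to the one inside the floor \eqref{eq-Up-NOMA-OP3}. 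This simultaneously proves convergence to a floor and equality of the two floors.

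The only technical point, and the step I would flag as the main obstacle, is justifying the interchange of the limit $\rho'\to\infty$ with the infinite summation. I would discharge it by dominated convergence, using the elementary monotonicity bound $\Gamma(i+\tfrac{1}{2},u)\le\Gamma(i+\tfrac{1}{2})$ valid for all $u\ge 0$: this shows each summand in \eqref{eq-Up-NOMA-OP2} is uniformly in $\rho'$ bounded by $\lambda^i/\bigl(i!\,2^{2i+1/2}(b\tilde{\gamma}_N/a+1/2)^{i+1/2}\bigr)$, whose sum is finite. With the limit passed inside, the proof of \eqref{eq-Up-NOMA-OP3} is complete.
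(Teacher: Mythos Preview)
Your argument is correct and follows exactly the approach in the paper: take $\rho'\to\infty$ in \eqref{eq-Up-NOMA-OP1} and \eqref{eq-Up-NOMA-OP2}, using $e^{-\tilde{\gamma}_N/(a\rho')}\to 1$ and $\Gamma(i+\tfrac12,\cdot)\to\Gamma(i+\tfrac12)$. Your dominated-convergence justification for swapping the limit with the series is an extra bit of rigor that the paper does not make explicit.
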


\begin{proof}
We have that
$\lim\limits_{\rho'\rightarrow\infty}e^{-\frac{\tilde{\gamma}_N}{a\rho'}} = 1$ and $\lim\limits_{\rho'\rightarrow\infty}\Gamma\left(i+\frac{1}{2},\frac{\tilde{\gamma}_N\tilde{\gamma}_F}{a\rho'}
+\frac{\tilde{\gamma}_F}{2b\rho'}\right)=\Gamma\left(i+\frac{1}{2}\right)$.
Thus, the limit of both $\mathbb{P}_N^u$ and $\mathbb{P}_F^u$ for $\rho'\rightarrow\infty$ can be obtained.
This completes the proof.
\end{proof}

\begin{corollary}\label{Coro-Uplink-OP}
In the considered IRS-aided NOMA network, both diversity orders of N and F for the uplink are $0$, i.e., $\mathcal{D}_N^u=0$ and $\mathcal{D}_F^u=0$.
\end{corollary}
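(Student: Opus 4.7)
The plan is to invoke the standard definition $\mathcal{D} = -\lim_{\rho' \to \infty} \frac{\log \mathbb{P}(\rho')}{\log \rho'}$ and combine it directly with Proposition \ref{Prop-Uplink-OP}. That proposition already establishes that both $\mathbb{P}_N^u$ and $\mathbb{P}_F^u$ converge, as $\rho' \to \infty$, to the common constant $\mathbb{P}_{both}^{u,\infty}$ given in \eqref{eq-Up-NOMA-OP3}, which is strictly positive and independent of $\rho'$. Once this is in hand, the corollary follows in one line: since $\log(c)/\log(\rho') \to 0$ for any fixed positive $c$, the negated limit defining the diversity order vanishes for both UDs.

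Concretely, I would first write
\begin{equation*}
\mathcal{D}_N^u = -\lim_{\rho' \to \infty} \frac{\log \mathbb{P}_N^u(\rho')}{\log \rho'} = -\lim_{\rho' \to \infty} \frac{\log \mathbb{P}_{both}^{u,\infty}}{\log \rho'} = 0,
\end{equation*}
and an identical display for $\mathcal{D}_F^u$, invoking Proposition \ref{Prop-Uplink-OP} for the middle equality. The only thing that really needs a sentence of justification is that the numerator in the limit is a finite, nonzero constant rather than being equal to $0$ or $1$; this is immediate from \eqref{eq-Up-NOMA-OP3} because the series $e^{-\lambda/2}\sum_i \lambda^i/[i! 2^{2i+1/2} (b\tilde{\gamma}_N/a + 1/2)^{i+1/2}]$ is positive but strictly less than $1$ (every factor $(b\tilde{\gamma}_N/a + 1/2)^{i+1/2}$ exceeds the corresponding term in the series expansion of $e^{\lambda/2}/\sqrt{2}$ whenever $b\tilde{\gamma}_N/a > 0$), so $\mathbb{P}_{both}^{u,\infty} \in (0,1)$.

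There is no real obstacle here; the whole point of stating Proposition \ref{Prop-Uplink-OP} before this corollary is that it does all the analytical work. The underlying physical reason, which I would flag in a closing remark rather than derive, is that uplink NOMA is interference-limited at high SNR: the interfering term $|\mathbf{G}\mathbf{\Theta}\mathbf{g}|^2$ in \eqref{eq-UP-SNR1} scales with $\rho'$ together with the signal of N, so $\mathrm{SINR}_N^u$ stays bounded and the OP cannot decay polynomially in $\rho'$. This is precisely what forces the diversity order to collapse to zero, in sharp contrast to the downlink case analyzed in Corollary \ref{Coro-Downlink-OP}.
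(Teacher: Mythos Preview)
Your proposal is correct and follows exactly the route the paper intends: the corollary is stated without an explicit proof block precisely because it is immediate from Proposition~\ref{Prop-Uplink-OP}, and you have spelled out that immediacy using the definition of diversity order. Your extra sentence verifying $\mathbb{P}_{both}^{u,\infty}\in(0,1)$ is a sensible addition (the phrasing about ``the series expansion of $e^{\lambda/2}/\sqrt{2}$'' is slightly awkward, but the comparison you want is simply that replacing $(b\tilde{\gamma}_N/a+1/2)^{i+1/2}$ by $(1/2)^{i+1/2}$ makes the series sum to exactly $1$, so any $b\tilde{\gamma}_N/a>0$ forces it strictly below $1$).
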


\begin{remark}
The diversity order of F for the uplink NOMA is not related to the number of reflecting elements and Nakagami fading parameters.
\end{remark}

\subsubsection{Ergodic Rate}
The ERs of the N and F are given by
\begin{equation}\label{eq-UP-NOMA-ERN}
\begin{split}
R_N^u=\mathbb{E}\left(\log_2\left(1+\mathrm{SINR}_N^u\right)\right),
\end{split}
\end{equation}
and
\begin{equation}\label{eq-UP-NOMA-ERF}
\begin{split}
R_F^u=\mathbb{E}\left(\log_2\left(1+\mathrm{SNR}_F^u\right)\right),
\end{split}
\end{equation}
respectively.

Then, the ERs of N and F can be obtained, and the results are provided in the following theorem.
\begin{theorem}\label{Theo-Uplink-ER}
In the considered IRS-aided NOMA network, the ER of N for the uplink is given by
\begin{equation}\label{eq-Up-NOMA-ER1}
\begin{split}
R_N^u\approx -\frac{\lambda^{\frac{1}{4}}}{2\ln(2)}e^{\frac{1}{a\rho'}-\frac{\lambda}{2}} \sum_{l=1}^{u_3}\omega_{3,l}\mathcal{J}_3(x_{3,l}),
\end{split}
\end{equation}
where $u_3$ is the number of nodes for the Gauss-Laguerre quadrature and determines the approximation precision, $x_{3,l}$ is the $l$th root of Laguerre polynomial $L_{u_3}(x)$, $\omega_{3,l}=\frac{x_{3,l}}{(u_3+1)^2\left(L_{u_3+1}(x_{3,l})\right)^2}$ is the weight, and
\begin{equation}\label{eq-5-1}
\begin{split}
\mathcal{J}_3(x)=x^{-\frac{1}{4}}e^{\left(\frac{b}{a}+\frac{1}{2}\right)x} \mathrm{Ei}\left(-\frac{b}{a}x-\frac{1}{a\rho'}\right)
I_{-\frac{1}{2}}\left(\sqrt{\lambda x}\right).
\end{split}
\end{equation}
On the other hand, the ER of F for the uplink is given by
\begin{equation}\label{eq-Up-NOMA-ER2}
\begin{split}
R_F^u \approx \frac{\lambda^{\frac{1}{4}}}{2}e^{-\frac{\lambda}{2}}
\sum_{l=1}^{u_4}\omega_{4,l}\mathcal{J}_4(x_{4,l}),
\end{split}
\end{equation}
where $u_4$ is the number of nodes for the Gauss-Laguerre quadrature and determines the approximation precision, $x_{4,l}$ is the $l$th root of Laguerre polynomial $L_{u_4}(x)$, $\omega_{4,l}=\frac{x_{4,l}}{(u_4+1)^2\left(L_{u_4+1}(x_{4,l})\right)^2}$ is the weight, and
\begin{equation}
\begin{split}
\mathcal{J}_4(x)=x^{-\frac{1}{4}}
e^{\frac{x}{2}}\log_2\left(1+b\rho' x\right)I_{-\frac{1}{2}}\left(\sqrt{\lambda x}\right).
\end{split}
\end{equation}
\end{theorem}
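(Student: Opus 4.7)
The plan is to handle $R_N^u$ and $R_F^u$ separately. For $R_N^u$, I would first condition on $X$ and evaluate the expectation over $Y = |h|^2 \sim \mathrm{Exp}(1)$ in closed form, and then integrate the result against $f_X(x)$ via a one-dimensional Gauss--Laguerre quadrature. Setting $c = a/(bX + 1/\rho')$, the inner integral becomes $\int_0^\infty \log_2(1 + cy)\, e^{-y}\, dy$, which admits the standard closed form $-\frac{1}{\ln(2)}\, e^{1/c}\, \mathrm{Ei}(-1/c)$ (cf.\ \cite[eq.\ (4.337.2)]{gradshteyn2007}). Substituting $1/c = bx/a + 1/(a\rho')$ and pulling the factor $e^{1/(a\rho')}$ outside, I would obtain
\begin{equation*}
R_N^u \approx -\frac{1}{\ln(2)}\, e^{\frac{1}{a\rho'}} \int_0^\infty e^{\frac{b}{a}x}\, \mathrm{Ei}\!\left(-\frac{b}{a}x - \frac{1}{a\rho'}\right) f_X(x)\, dx.
\end{equation*}
Inserting the Bessel-function form $f_X(x) = \frac{\lambda^{1/4}}{2}\, e^{-(x+\lambda)/2}\, x^{-1/4}\, I_{-1/2}(\sqrt{\lambda x})$ from Lemma \ref{Lemm-Channel-CLT} and splitting $e^{-x/2}$ as $e^{-x}\, e^{x/2}$ places the integral in canonical Gauss--Laguerre form $\int_0^\infty e^{-x}\, \tilde h(x)\, dx$, so that a $u_3$-node rule yields \eqref{eq-Up-NOMA-ER1} with $\mathcal{J}_3$ as given in \eqref{eq-5-1}.

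For $R_F^u$, since $\mathrm{SNR}_F^u = b\rho' X$ depends only on $X$, no inner expectation is needed. I would write $R_F^u \approx \int_0^\infty \log_2(1 + b\rho' x)\, f_X(x)\, dx$, substitute the same Bessel-function expression for $f_X$, and once more split $e^{-x/2} = e^{-x}\, e^{x/2}$ to expose the $e^{-x}$ weight required by Gauss--Laguerre quadrature. This mirrors the derivation of \eqref{eq-OMA-ER2} in the proof of Theorem \ref{Theo-OMA-ER}, with only the $1/2$ pre-log factor removed, and directly produces \eqref{eq-Up-NOMA-ER2}.

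The main obstacle is the inner integral in the $R_N^u$ step: the uplink SINR places $Y$ in the numerator and $X$ in the denominator of the same ratio, so there is no tractable joint closed form. The key observation that resolves this is that, conditional on $X$, the logarithm-times-exponential integral over $Y$ collapses to the standard $e^{1/c}\,\mathrm{Ei}(-1/c)$ identity, reducing the computation to a single $X$-integral that is well suited to Gauss--Laguerre quadrature. A subsidiary check is that $\mathcal{J}_3(x)\, e^{-x}$ decays at infinity: since $\mathrm{Ei}(-\alpha x) \sim -e^{-\alpha x}/(\alpha x)$ as $\alpha x \to \infty$, the $e^{(b/a + 1/2)x}$ factor is tempered to an overall envelope of order $e^{-x/2}$ after the $e^{-x}$ weight is absorbed, which guarantees that the quadrature converges at the claimed approximation precision.
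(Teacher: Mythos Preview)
Your proposal is correct and follows essentially the same route as the paper: for $R_N^u$ you condition on $X$, collapse the $Y$-integral to the $-e^{1/c}\,\mathrm{Ei}(-1/c)$ identity (the paper reaches the same closed form via integration by parts and \cite[eq.\ (3.352.4)]{gradshteyn2007} rather than \cite[eq.\ (4.337.2)]{gradshteyn2007}), substitute the Bessel form of $f_X$, and apply Gauss--Laguerre; for $R_F^u$ you directly integrate $\log_2(1+b\rho' x)f_X(x)$ by Gauss--Laguerre, exactly as in the paper. Your added decay check on $\mathcal{J}_3(x)e^{-x}$ is a welcome extra that the paper omits.
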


\begin{proof}
See Appendix \ref{Appen-Theo-Uplink-ER}.
\end{proof}

\begin{remark}
Both ERs are accurate as they are not related to the integral of $\int_{0}^{0^{+}}g(x)f_X(x)dx$.
Due to the principle of uplink NOMA, the ER of N approaches a ceiling when $\rho'\rightarrow\infty$, while the ER of F increases as the SNR increases.
\end{remark}

The ceiling for  N and the high-SNR approximation for F are presented in the following proposition.
\begin{proposition}\label{Prop-Uplink-ER}
In the high-SNR regime, $R_N^u$ approaches a ceiling that is given by
\begin{equation}\label{eq-Up-NOMA-ER3}
\begin{split}
R_N^{u,\infty} \approx -\frac{\lambda^{\frac{1}{4}}}{2\ln(2)}e^{-\frac{\lambda}{2}} \sum_{l=1}^{u_3}\omega_{3,l}\mathcal{J}_3^{\infty}\left(x_{3,l}\right),
\end{split}
\end{equation}
where
\begin{equation}
\begin{split}
\mathcal{J}_3^{\infty}(x)=x^{-\frac{1}{4}}e^{\left(\frac{b}{a}+\frac{1}{2}\right)x} \mathrm{Ei}\left(-\frac{b}{a}x\right)
I_{-\frac{1}{2}}\left(\sqrt{\lambda x}\right).
\end{split}
\end{equation}
On the other hand, the tight upper bound of $R_F^u$'s high-SNR approximation is given by
\begin{equation}\label{eq-Up-NOMA-ER4}
\begin{split}
R_F^{u,\infty}=\log_2\left(b\rho' (1+\lambda)\right).
\end{split}
\end{equation}
\end{proposition}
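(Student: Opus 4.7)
The plan is to handle the two assertions separately, since they correspond to opposite roles for N and F under uplink NOMA: N sees interference from F that persists as $\rho'\to\infty$, while F is decoded after SIC with no interference, so it benefits linearly from $\rho'$.

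For the ceiling of $R_N^u$, I would start from the closed-form Chebyshev--Laguerre representation in \eqref{eq-Up-NOMA-ER1} and take the termwise limit as $\rho'\to\infty$. Inspecting the expression, only two factors depend on $\rho'$: the prefactor $e^{1/(a\rho')}$ and the argument of the exponential integral inside $\mathcal{J}_3(x)$. Since $e^{1/(a\rho')}\to 1$ and $\mathrm{Ei}\bigl(-\tfrac{b}{a}x-\tfrac{1}{a\rho'}\bigr)\to \mathrm{Ei}\bigl(-\tfrac{b}{a}x\bigr)$ (the exponential integral is continuous on the negative real axis for any fixed $x>0$, which is ensured by the nodes $x_{3,l}$ of the Laguerre quadrature), the limit of the quadrature is obtained by simply replacing $\mathcal{J}_3$ with $\mathcal{J}_3^\infty$. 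Interchange of limit and finite sum is automatic, yielding \eqref{eq-Up-NOMA-ER3}. A conceptual justification of why a ceiling appears at all is that, after removing $1/\rho'$ from the denominator of $\mathrm{SINR}_N^u$, the residual interference ratio $aY/(bX)$ is independent of $\rho'$, so its logarithm has a finite expectation.

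For the upper bound on $R_F^u$, I would argue directly from the definition \eqref{eq-UP-NOMA-ERF}. Using \eqref{eq-UP-SNR2}, the inner quantity is $\log_2(1+b\rho' X)$, which is a concave function of the nonnegative random variable $X$. Jensen's inequality then gives $R_F^u\le \log_2(1+b\rho'\mathbb{E}(X))$. The final ingredient is the mean of the approximating distribution: since $X\,\dot\thicksim\,\chi_1^{'2}(\lambda)$ by Lemma \ref{Lemm-Channel-CLT}, the standard identity $\mathbb{E}(X)=1+\lambda$ applies. Letting $\rho'\to\infty$ and dropping the constant $1$ inside the logarithm produces \eqref{eq-Up-NOMA-ER4}. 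The claim that the bound is tight does not need analytic proof at this stage; it is reported as a simulation observation, consistent with the fact that $X$ concentrates around its mean when $K$ is large (its coefficient of variation is of order $1/\sqrt{K}$ through the noncentrality parameter $\lambda\propto K$).

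The main obstacle, if any, is not algebraic but justificatory: one must be comfortable taking the $\rho'\to\infty$ limit inside the Gauss--Laguerre sum in the first part, and one must accept that Jensen's bound captures the correct high-SNR slope in the second part. Both are standard, so the proof will be quite short, essentially a limit computation together with an invocation of Jensen's inequality and the mean of a noncentral chi-square variable.
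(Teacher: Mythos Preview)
Your proposal is correct and follows essentially the same approach as the paper: for N you take the termwise limit $\rho'\to\infty$ inside the Gauss--Laguerre sum (the paper simply notes $\tfrac{1}{a\rho'}\to 0$ in \eqref{eq-Up-NOMA-ER1}), and for F you apply Jensen's inequality to the concave map $x\mapsto\log_2(1+b\rho'x)$ together with $\mathbb{E}(X)=1+\lambda$ from the noncentral chi-square approximation, exactly as the paper does via its reference to Proposition~\ref{Prop-OMA-ER}. Your additional remarks on continuity of $\mathrm{Ei}$ at the Laguerre nodes and on the concentration of $X$ for large $K$ are sound elaborations of points the paper leaves implicit.
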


\begin{proof}
For N, when $\rho'\rightarrow\infty$, we have $\frac{1}{a\rho'}\rightarrow0$. Hence, based on \eqref{eq-Up-NOMA-ER1} and \eqref{eq-5-1}, we can obtain \eqref{eq-Up-NOMA-ER3}.
The analysis for F is similar to that in the proof of Proposition \ref{Prop-OMA-ER}.
This completes the proof.
\end{proof}

\begin{corollary}\label{Coro-Uplink-ER}
In the considered IRS-aided NOMA network, the high-SNR slopes of N and F for the uplink are given by $\mathcal{S}_N^u=0$ and $\mathcal{S}_F^u=1$, respectively.
\end{corollary}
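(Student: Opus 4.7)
The plan is to derive both slopes directly from the high-SNR characterizations established in Proposition \ref{Prop-Uplink-ER}, using the definition $\mathcal{S}=\lim_{\rho'\to\infty}R(\rho')/\log_2(\rho')$. Because the hard analytical work (identifying the ceiling for $R_N^u$ and the tight logarithmic upper bound for $R_F^u$) has already been carried out, the corollary reduces essentially to two limit computations, mirroring the structure of the proof of Corollary \ref{Coro-OMA-ER}.

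For N, I would observe that the expression $R_N^{u,\infty}$ in \eqref{eq-Up-NOMA-ER3} is a finite constant independent of $\rho'$: the dependence on $\rho'$ has been eliminated because the argument of $\mathrm{Ei}(\cdot)$ collapses to $-\frac{b}{a}x$ when $1/(a\rho')\to 0$, and the Gauss-Laguerre sum has a fixed set of nodes/weights. Dividing any bounded quantity by $\log_2(\rho')$ and taking $\rho'\to\infty$ yields zero, so $\mathcal{S}_N^u = 0$. Intuitively, this matches the principle of uplink NOMA: N's signal is detected first in the presence of F's interference, and the interference power scales with $\rho'$ exactly like the desired signal, so the SINR cannot grow without bound.

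For F, I would start from the tight upper bound $R_F^{u,\infty}=\log_2\bigl(b\rho'(1+\lambda)\bigr)$ of Proposition \ref{Prop-Uplink-ER}. Writing this as $\log_2(\rho')+\log_2(b(1+\lambda))$ and dividing by $\log_2(\rho')$ gives limit $1$, so $\mathcal{S}_F^u \le 1$. To argue that this upper bound is achieved, I would proceed exactly as in Corollary \ref{Coro-OMA-ER}: since $R_F^{u,\infty}$ is tight (as confirmed by the Jensen's-inequality derivation used in Proposition \ref{Prop-OMA-ER} and verified numerically), the slope attains its upper bound, giving $\mathcal{S}_F^u=1$. Alternatively, one can observe that $\mathrm{SNR}_F^u=b\rho' X$ with $X>0$ almost surely, so $\log_2(1+b\rho' X)/\log_2(\rho')\to 1$ pointwise and a dominated-convergence argument transfers this to the expectation, confirming $\mathcal{S}_F^u=1$.

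The only real subtlety, and the step I would scrutinize most carefully, is the claim that the Jensen-based upper bound is asymptotically tight enough to pin down the slope (rather than merely bound it). For the dominated-convergence route, the potential obstacle is behaviour near $X=0$, where $\log_2(1+b\rho'X)$ can be large; however, Lemma \ref{Lemm-Channel-Appro} shows that $F_Z^{0^+}(z)$ vanishes polynomially at rate $z^{2m_sK}$, so the PDF of $X$ near zero decays fast enough to ensure an integrable majorant, and the limit can be passed inside the expectation. Either route therefore delivers $\mathcal{S}_F^u=1$ rigorously, while the N-side claim is immediate from boundedness of $R_N^{u,\infty}$.
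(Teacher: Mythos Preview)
Your proposal is correct and follows essentially the same route as the paper, which simply refers back to the argument of Corollary~\ref{Coro-OMA-ER}: the ceiling for N gives slope $0$, and the tight Jensen upper bound for F gives slope $1$. Your additional dominated-convergence justification goes beyond what the paper provides; the one small slip is that near $X=0$ the quantity $\log_2(1+b\rho'X)$ is small, not large, so the integrability concern for the majorant actually lies at large $X$ (where $\mathbb{E}[\log_2 X]<\infty$ suffices), but this does not affect the validity of your main argument.
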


\begin{proof}
Similar to the proof of Corollary \ref{Coro-OMA-ER}.
\end{proof}

\begin{remark}
The high-SNR slope of F for the uplink NOMA is not affected by the number of reflecting elements and Nakagami fading parameters.
\end{remark}

\subsection{OMA}
For OMA, the performance analyses for the uplink and the downlink are similar, and we can obtain all results for the uplink by referring to Subsection \ref{Subsec-OMA} in a simple method, i.e., replacing the transmit SNR of the BS $(\rho)$ with the transmit SNR of the UD $(\rho')$.

\subsection{Summary of All Results}
After completing all analyses for downlink and uplink networks, all results related to diversity order and high-SNR slope are summarized in Table \ref{Tab_Results} for ease of reference.
\begin{table}
\centering
\caption{Diversity order ($\mathcal{D}$) and high-SNR slope ($\mathcal{S}$) for each scenario}
\begin{tabular}{|c|c|c|c|c|c|}
\hline
\centering
\multirow{2}{*}{Multiple-access scheme} & \multirow{2}{*}{UD} & \multicolumn{2}{c|}{Downlink} & \multicolumn{2}{c|}{Uplink}  \\
\cline{3-6}                             & & $\mathcal{D}$ & $\mathcal{S}$ & $\mathcal{D}$ & $\mathcal{S}$ \\
\hline

\multirow{2}{*}{NOMA}                   & N & $1$ & $1$ & $0$ & $0$  \\
\cline{2-6}                             & F & $m_sK$ & $0$ & $0$ & $1$ \\
\hline

\multirow{2}{*}{OMA}                    & N & $1$ & $0.5$ & $1$ & $0.5$  \\
\cline{2-6}                             & F & $m_sK$ & $0.5$ & $m_sK$ & $0.5$ \\
\hline
\end{tabular}\label{Tab_Results}
\end{table}

\begin{remark}
We observe that OMA has higher diversity orders than NOMA in uplink networks. It only reflects that OMA outperforms NOMA in the high-SNR regime for uplink networks with the fixed-rate transmission. In the low-SNR regime, NOMA has better performance than OMA \cite{wei2019performance}.
\end{remark}

\section{Numerical Results and Discussion}

\begin{table*}
\centering
\caption{Parameters setting}
\begin{tabular}{|c|c|}
\hline
Bandwidth & $B=1$ MHz\\
\hline
Amplitude-reflection coefficient of the IRS & $\beta=0.9$\\
\hline
Distances & $d_N=10$ m, $d_{F1}=40$ m, and $d_{F2}=10$ m\\
\hline
Path-loss exponents & $\alpha_h=3.5$, $\alpha_G=2.5$, and $\alpha_g=2.5$\\
\hline
Nakagami fading parameters& $m_G=3$ and $m_g=1.5$\\
\hline
Target data-rates for the fixed-rate transmission & $\tilde{R}_N=\tilde{R}_F=0.1$ Mbps\\
\hline
Number of points for Chebyshev-Gauss and Gauss-Laguerre quadratures& $u_1=u_2=u_3=u_4=100$\\
\hline
\end{tabular}\label{Tab_Parameters}
\end{table*}

In this section, numerical results are presented for the performance evaluation of the considered network. Meanwhile, Monte Carlo simulations are conducted to verify the accuracy. The parameters are set as shown in Table \ref{Tab_Parameters} \cite{ding2020simple,ding2020impact,hou2019reconfigurable}.

For comparisons, we regard an FDR-aided NOMA network as the benchmark. Specifically, an FDR under the classic protocol is deployed at the place of the IRS to help F to communicate with the BS.
The FDR works under a realistic assumption that is the same as \cite{zhong2016non}. Specifically, since the relay is aware of its own transmitted symbol, self-interference cancellation can be applied. we assume that the self-interference channel experiences the Nakagami-$m$ fading and the self-interference cancellation is imperfect.
Since the reflection at the IRS is passive without consuming the energy, for fairness, we assume that the transmit power at the BS and the FDR is $P_b^r=0.5P_b$ for the downlink, and the transmit power at F and the FDR is $P_u^r=0.5P_u$ for the uplink.

\subsection{Downlink Networks}

\begin{figure}
\centering
     \begin{subfigure}[OPs derived from the CLT-based channel statistics when $K=8$, $\alpha_1=0.1$, and $\alpha_2=0.9$.]{
         \label{fig_OP_Downlink}
         \includegraphics[width=0.55\textwidth]{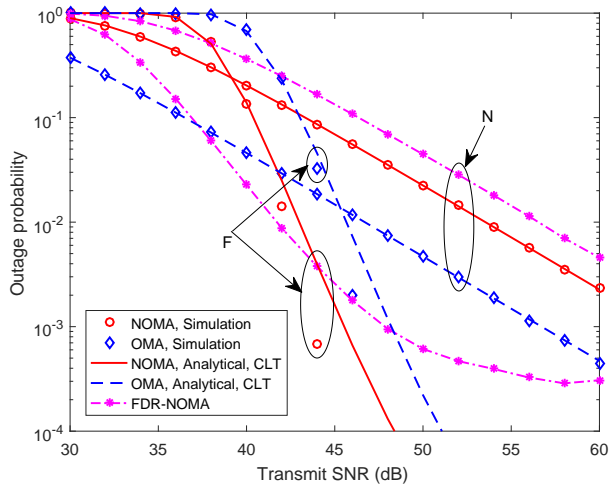}}
     \end{subfigure}
    \hfill
     \begin{subfigure}[High-SNR approximations of OPs derived from the LT-based channel statistics when $K=2$.]{
         \label{fig_OP_Downlink_High_SNR}
         \includegraphics[width=0.55\textwidth]{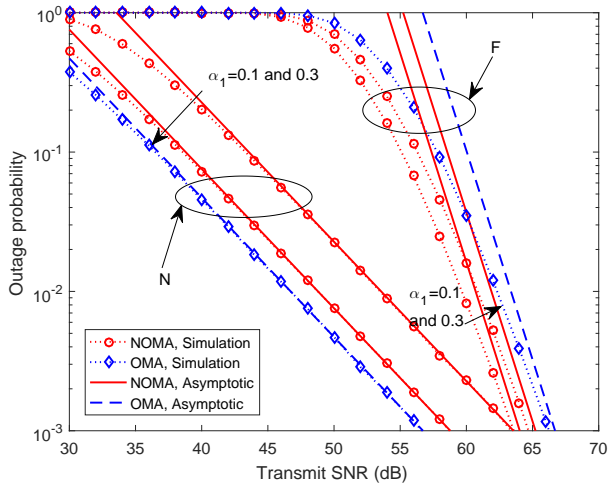}}
     \end{subfigure}
     \caption{OPs versus the transmit SNR in downlink networks.}
\end{figure}

In Fig. \ref{fig_OP_Downlink}, the OPs versus the transmit SNR in IRS-aided NOMA, IRS-aided OMA, and FDR-aided NOMA networks of downlink are plotted.
First, the analyses of N for both NOMA and OMA are accurate as all simulation results coincide with the corresponding analytical results that are derived from \eqref{eq-NOMA-OP1} and \eqref{eq-OMA-OP1}.
For F, the analyses under these two schemes that are derived from \eqref{eq-NOMA-OP2} and \eqref{eq-OMA-OP2} are accurate in the low-SNR regime but are inaccurate in the high-SNR regime, which results from the use of the CLT-based channel statistics.
As a benchmark, the OP curves for the FDR-aided system are plotted for comparisons. We observe that N in the IRS-aided NOMA system always has better performance than that in the FDR-aided NOMA system, since the transmit power of the BS in the former system is twice that in the latter.
For F, the FDR-aided system has better performance than the IRS-aided system in the low-SNR regime, since the IRS transmission experiences severe path loss \cite{ding2020impact}. Nevertheless, in the high-SNR regime, the IRS-aided system has much better performance. One reason is that the OP of F in the FDR-aided system converges to a floor in the high-SNR regime due to the residual self-interference. On the other hand, F in the IRS-aided system has a large diversity order, which is the advantage of the IRS over FDR.

Since the theoretical results of F are not accurate in the high-SNR regime in Fig. \ref{fig_OP_Downlink}, we further plot the high-SNR approximation curves when $\alpha_1=0.1$ ($\alpha_2=0.9$) and $\alpha_1=0.3$ ($\alpha_2=0.7$) in Fig. \ref{fig_OP_Downlink_High_SNR}.
It is observed that the OPs of N and F for NOMA and OMA gradually approach their respective asymptotic curves derived from \eqref{eq-NOMA-OP3}, \eqref{eq-NOMA-OP4}, \eqref{eq-OMA-OP3}, and \eqref{eq-OMA-OP4}, which validates our analysis.
We also observe that a larger $\alpha_1$ ($\alpha_2=1-\alpha_1$) incurs a lower OP of N and a higher OP of F.
Furthermore, by observing slopes, we find that both diversity orders of N under NOMA and OMA schemes are $1$ when $K=2$.
For F, the diversity orders under NOMA and OMA schemes are the same and equal $3$ for $K=2$, which is consistent with Corollary \ref{Coro-Downlink-OP} and Corollary \ref{Coro-OMA-OP}.
Therefore, the fixed power-allocation coefficients affect the OP but have no effect on the diversity order.

\begin{figure}
\centering
     \begin{subfigure}[ERs derived from the CLT-based channel statistics when $\alpha_1=0.1$ and $\alpha_2=0.9$.]{
         \label{fig_ER_Downlink}
         \includegraphics[width=0.55\textwidth]{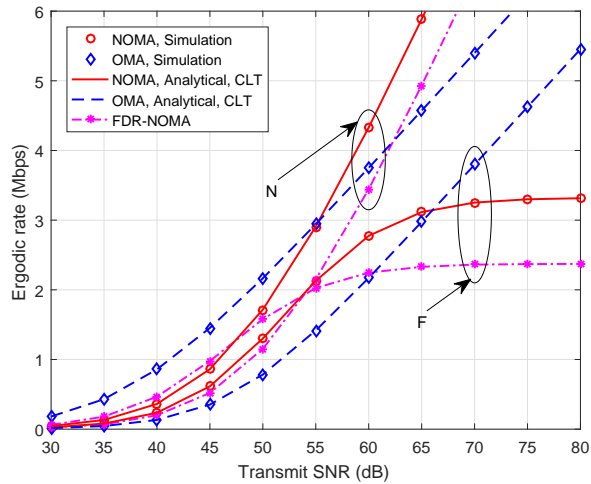}}
     \end{subfigure}
    \hfill
     \begin{subfigure}[High-SNR approximations of ERs.]{
         \label{fig_ER_Downlink_High_SNR}
         \includegraphics[width=0.55\textwidth]{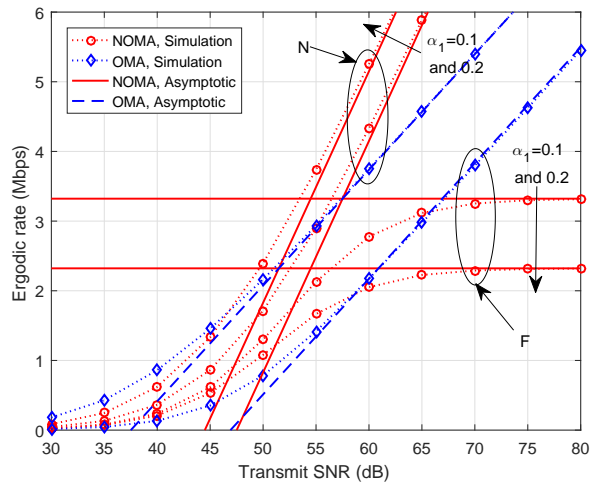}}
     \end{subfigure}
     \caption{ERs versus the transmit SNR in downlink networks when $K=10$.}
\end{figure}

In Fig. \ref{fig_ER_Downlink}, the ER curves for different downlink networks are depicted.
First, it is observed that the simulation points for IRS-aided NOMA and OMA networks match well with their respective analytical results derived from \eqref{eq-NOMA-ER1}, \eqref{eq-NOMA-ER2}, \eqref{eq-OMA-ER1}, and \eqref{eq-OMA-ER2}.
As a benchmark, the ER curves of the FDR-aided NOMA are plotted for comparisons. It is observed that N in the IRS-aided system always has a higher ER than that in the FDR-aided system, since the transmit power of the BS in the former system is twice that in the latter.
On the other hand, for F, the ER in the IRS-aided system is lower than that in the FDR-aided system in the low-SNR regime. This is because the IRS transmission experiences severe path loss \cite{ding2020impact}. As the SNR increases, the ER of F in the IRS-aided system approaches a higher ceiling than that in the FDR-aided system due to the existence of the residual self-interference when using the FDR, which demonstrates the superiority of IRS again.

Furthermore, the asymptotic analysis is validated when $\alpha_1=0.1$ ($\alpha_2=0.9$) and $\alpha_1=0.2$ ($\alpha_2=0.8$) in Fig. \ref{fig_ER_Downlink_High_SNR}.
It is observed that the high-SNR approximations that are derived from \eqref{eq-NOMA-ER3}, \eqref{eq-NOMA-ER4}, \eqref{eq-OMA-ER3}, and \eqref{eq-OMA-ER4} are accurate.
We also observe that a larger $\alpha_1$ ($\alpha_2=1-\alpha_1$) incurs a higher ER of N and a lower ER of F.
Also, we observe that the high-SNR slope of N for NOMA is $1$,\footnote{The high-SNR slope of $1$ corresponds to the slope of $0.1\log_210\approx 0.33$ in Fig. \ref{fig_ER_Downlink_High_SNR}.} while that for OMA is $0.5$.
Due to the principle of downlink NOMA, the ER of F approaches a ceiling in the high-SNR regime, i.e., the high-SNR slope of F for NOMA is $0$.
Meanwhile, F has a high-SNR slope of $0.5$ under the OMA scheme.
These observations coincide with Corollary \ref{Coro-Downlink-ER} and Corollary \ref{Coro-OMA-ER}.
Therefore, the fixed power-allocation coefficients affect the ER but have no effect on the high-SNR slope.

\subsection{Uplink Networks}
\begin{figure}
\begin{center}
\includegraphics[width=0.55\textwidth]{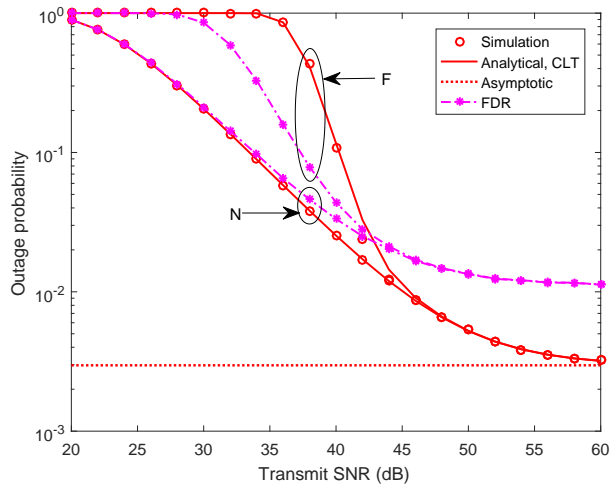}
\end{center}
\caption{OPs versus the transmit SNR in uplink NOMA networks when $K=8$.}
\label{fig_OP_Uplink}
\end{figure}

In Fig. \ref{fig_OP_Uplink}, the OPs versus the transmit SNR in uplink NOMA networks are plotted.
For the IRS-aided NOMA network, it is observed that the simulation points of N match well with the analytical results derived from \eqref{eq-Up-NOMA-OP1}.
The simulation results of F have a small bias with the analytical results derived from \eqref{eq-Up-NOMA-OP2} in the medium-SNR regime (e.g., 42 dB), which is consistent with our explanation in Remark \ref{Rema-10}.
Then, we observe that both OPs of N and F decrease as the increase of the transmit SNR and gradually approach a floor that is derived from \eqref{eq-Up-NOMA-OP3}.
This is because of the principle of uplink NOMA that F's signal is regarded as interference to decode N's signal.
Thus, the diversity orders of both UDs are $0$, which is consistent with Corollary \ref{Coro-Uplink-OP}.
Finally, it is observed that the OPs for the FDR-aided NOMA network also approach a floor.
In Fig. \ref{fig_OP_Uplink}, the floor for the IRS-aided network is lower than that for the FDR-aided network.
However, the former will be higher than the latter if we increase $K$.
Meanwhile, the floor for the FDR-aided network is also determined by the efficiency of the self-interference cancellation.
Thus, for uplink IRS-aided networks, the contribution of the IRS to OP is not obvious.

\begin{figure}
\begin{center}
\includegraphics[width=0.55\textwidth]{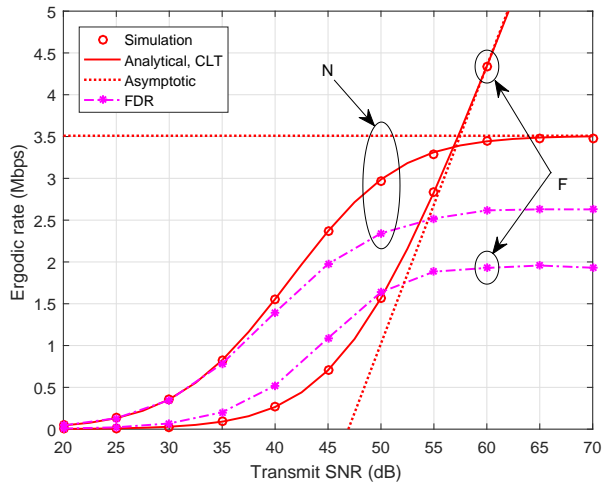}
\end{center}
\caption{ERs versus the transmit SNR in uplink NOMA networks when $K=10$.}
\label{fig_ER_Uplink}
\end{figure}

In Fig. \ref{fig_ER_Uplink}, the ER curves for different uplink networks are depicted.
For the IRS-aided NOMA network, we observe that the simulated results coincide with the corresponding analytical results that are derived from \eqref{eq-Up-NOMA-ER1} and \eqref{eq-Up-NOMA-ER2}.
In addition, the approximations in the high-SNR regime that are derived from \eqref{eq-Up-NOMA-ER3} and \eqref{eq-Up-NOMA-ER4} are also asymptotic exact.
Following that, it is observed that the high-SNR slopes of N and F are $0$ and $1$, respectively, which is consistent with Corollary \ref{Coro-Uplink-ER}.
Lastly, we observe that the ER of N in the FDR-aided NOMA network also converges to a ceiling.
For the ER of F, it remains increasing with the increase of the transmit SNR in the IRS-aided network, while it has a ceiling in the FDR-aided network due to the residual self-interference, which reveals the advantage of the IRS.

\section{Conclusion}

In this paper, we have characterized the system performance of IRS-aided NOMA and OMA networks for downlink and uplink transmissions.
We have demonstrated that the use of IRS can significantly enhance system performance, especially improving the diversity order.
We have assumed the fixed power allocation for downlink NOMA, which affects the OP and ER but has no influence on the diversity order and high-SNR slope.
Furthermore, simulation results have demonstrated that IRS outperforms FDR in the high-SNR regime.
Since this work focuses on SISO networks, IRS-aided MIMO networks are worthy of investigation for future work.
Moreover, optimizing power allocation for NOMA can further improve system performance, which is also a direction of future research.

\begin{appendices}

\section{Proof of Lemma \ref{Lemm-Channel-CLT}}\label{Appen-Lemm-Channel-CLT}
\renewcommand{\theequation}{\thesection.\arabic{equation}}
\setcounter{equation}{0}

Based on the property of the Nakagami-$m$ fading model, the expectation and variance of $|\mathcal{G}_{k}|$ are $\mu_\mathcal{G}=\left(\frac{1}{m_\mathcal{G}}\right)^\frac{1}{2}\frac{\Gamma\left(m_\mathcal{G}+\frac{1}{2}\right)}{\Gamma(m_\mathcal{G})}$ and $\mathrm{Var}_\mathcal{G}=1-\frac{1}{m_\mathcal{G}}\left(\frac{\Gamma\left(m_\mathcal{G}+\frac{1}{2}\right)}{\Gamma(m_\mathcal{G})}\right)^2$, respectively, where $\mathcal{G} \in \{G, g\}$.
Then, the expectation and variance of $|G_{k}||g_{k}|$ are $\mu_p=\mu_g\mu_G=\sqrt{\xi}$ and $\mathrm{Var}_p=\big(\mathrm{Var}_g+\mu_g^2\big)\big(\mathrm{Var}_G+\mu_G^2\big)-\mu_g^2\mu_G^2=1-\xi$, respectively. Next, since all $|G_{k}||g_{k}|$ $(k = 1, 2, \cdots, K)$ are independent and identically distributed (i.i.d.), based on the CLT, $\sum_{k=1}^{K}|G_{k}||g_{k}|$ tends towards a Gaussian distribution as
\begin{equation}
\begin{split}
\sum_{k=1}^{K}|G_{k}||g_{k}| \dot\thicksim \mathcal{N}\left(K\mu_p,K\mathrm{Var}_p\right).
\end{split}
\end{equation}
Furthermore, we unify the variance and have
\begin{equation}
\begin{split}
\frac{\sum_{k=1}^{K}|G_{k}||g_{k}|}{\sqrt{K\mathrm{Var}_p}} \dot\thicksim \mathcal{N}\left(\frac{\sqrt{K}\mu_p}{\sqrt{\mathrm{Var}_p}},1\right).
\end{split}
\end{equation}
Hence, $X=\frac{\left(\sum_{k=1}^{K}|G_{k}||g_{k}|\right)^2}{K\mathrm{Var}_p}$ follows a noncentral chi-square distribution  $\chi_1^{'2}(\lambda)$ with a PDF given by
\begin{equation}
\begin{split}
f_X(x)=\frac{\lambda^{\frac{1}{4}}}{2}e^{-\frac{x+\lambda}{2}}x^{-\frac{1}{4}}I_{-\frac{1}{2}}\left(\sqrt{\lambda x}\right),
\end{split}
\end{equation}
where $\lambda=\frac{K\mu_p^2}{\mathrm{Var}_p}$ and
\begin{equation}
\begin{split}
I_{-\frac{1}{2}}\left(\sqrt{\lambda x}\right)=\left(\frac{\sqrt{\lambda x}}{2}\right)^{-\frac{1}{2}}\sum_{i=0}^{\infty}\frac{\left(\frac{\lambda x}{4}\right)^i}{i!\Gamma(i+\frac{1}{2})}.
\end{split}
\end{equation}
The CDF of $X$ is given by
\begin{equation}
\begin{split}
F_X(x)=1-Q_{\frac{1}{2}}\left(\sqrt{\lambda},\sqrt{x}\right),
\end{split}
\end{equation}
where
\begin{equation}
\begin{split}
Q_{\frac{1}{2}}\left(\sqrt{\lambda},\sqrt{x}\right)=1-e^{-\frac{\lambda}{2}}
\sum_{i=0}^{\infty}\left(\frac{\lambda}{2}\right)^i\frac{\gamma\left(i+\frac{1}{2},\frac{x}{2}\right)}{i!\Gamma\left(i+\frac{1}{2}\right)}.
\end{split}
\end{equation}
This completes the proof.

\section{Proof of Lemma \ref{Lemm-Channel-Appro}}\label{Appen-Lemm-Channel-Appro}
\renewcommand{\theequation}{\thesection.\arabic{equation}}
\setcounter{equation}{0}

Denote that $Q_k=|G_{k}||g_{k}|$.
The PDF of $Q_k$ (the product of two Nakagami-$m$ random variables) \cite{bhargav2018product} is given by
\begin{equation}
\begin{split}
f_{Q_k}(q)=\frac{4(m_sm_l)^{\frac{m_s+m_l}{2}}}{\Gamma(m_s)\Gamma(m_l)}q^{m_s+m_l-1}K_{m_s-m_l}(2\sqrt{m_sm_l}q),
\end{split}
\end{equation}
for $q\ge0$, where $K_{v}(\cdot)$ is the modified Bessel function of the second kind.
The LT of $f_{Q_k}$ is derived as
\begin{equation}
\begin{split}
\mathcal{L}_{f_{Q_k}}(s)=\frac{4(m_sm_l)^{\frac{m_s+m_l}{2}}}{\Gamma(m_s)\Gamma(m_l)}
\int_0^{\infty}q^{m_s+m_l-1}e^{-sq}
K_{m_s-m_l}(2\sqrt{m_sm_l}q)dq.
\end{split}
\end{equation}
Furthermore, by referring to \cite[\textrm{eq}. (6.621.3)]{gradshteyn2007}, we have
\begin{equation}\label{eq-Appe-3}
\begin{split}
\mathcal{L}_{f_{Q_k}}(s)=\phi\left(s+2\sqrt{m_sm_l}\right)^{-2m_s} F\left(2m_s,m_s-m_l+\frac{1}{2};m_s+m_l+\frac{1}{2};\frac{s-2\sqrt{m_sm_l}}{s+2\sqrt{m_sm_l}}\right),
\end{split}
\end{equation}
where $\phi=\frac{\sqrt{\pi}4^{m_s-m_l+1}(m_sm_l)^{m_s}\Gamma(2m_s)\Gamma(2m_l)}{\Gamma(m_s)\Gamma(m_l)\Gamma\left(m_s+m_l+\frac{1}{2}\right)}$ and $F(\cdot,\cdot;\cdot;\cdot)$ is the hypergeometric series.
We observe that \eqref{eq-Appe-3} is so complicated that it is impossible to perform the inverse LT for $\prod_{k=1}^{K}\mathcal{L}_{f_{Q_k}}(s)$. To address it, we simplify \eqref{eq-Appe-3} by assuming $s\rightarrow\infty$ to obtain the PDF for the channel gain near 0.
When $s\rightarrow\infty$, since $m_s<m_l$ satisfies the condition of \cite[\textrm{eq}. (9.122.1)]{gradshteyn2007}, we have
\begin{equation}
\begin{split}
\mathcal{L}_{f_{Q_k}}^{\infty}(s)=\tilde{m}\left(s+2\sqrt{m_sm_l}\right)^{-2m_s}.
\end{split}
\end{equation}
Since all $Q_k$ $(k=1, 2, \cdots, K)$ are i.i.d., the LT of the PDF of $Z=\sum_{k=1}^{K}Q_k$ for $s\rightarrow\infty$ is given by
\begin{equation}\label{eq-Appe-1}
\begin{split}
\mathcal{L}_{f_Z}^{\infty}(s)=\prod_{k=1}^{K}\mathcal{L}_{f_{Q_k}}^{\infty}(s)=\tilde{m}^K\left(s+2\sqrt{m_sm_l}\right)^{-2m_sK}.
\end{split}
\end{equation}
Thus, by conducting the inverse LT for \eqref{eq-Appe-1}, the PDF of $Z$ for $z\rightarrow0^+$ can be derived as \eqref{eq-Lemm-1} based on \cite[\textrm{eq}. (17.13.3)]{gradshteyn2007}.
Following that, according to \cite[\textrm{eq}. (3.351.1)]{gradshteyn2007}, the CDF of $Z$ for $z\rightarrow0^+$ can be derived as \eqref{eq-Lemm-2}.
This completes the proof.

\section{Proof of Theorem \ref{Theo-Downlink-ER}}\label{Appen-Theo-Downlink-ER}
\renewcommand{\theequation}{\thesection.\arabic{equation}}
\setcounter{equation}{0}

First, we have
\begin{equation}
\begin{split}
R_N^d=-\int_{0}^{\infty}\log_2\left(1+a\alpha_1 \rho y\right)d\left(1-F_Y(y)\right)
=\frac{1}{\ln(2)}\int_{0}^{\infty}\frac{e^{-y}}{y+\frac{1}{a\alpha_1 \rho}} dy.
\end{split}
\end{equation}
Then, by referring to \cite[\textrm{eq}. (3.352.4)]{gradshteyn2007}, we can derive \eqref{eq-NOMA-ER1}.
To obtain $R_F^d$, we denote that $\tilde{X}=\frac{b\alpha_2 X}{b\alpha_1 X + 1/\rho}$, and the corresponding CDF is given by
\begin{equation}
\begin{split}
F_{\tilde{X}}(\tilde{x})=F_X\left(\frac{\tilde{x}}{b\rho(\alpha_2-\tilde{x}\alpha_1)}\right)
=e^{-\frac{\lambda}{2}}\sum_{i=0}^{\infty}\frac{\lambda^i\gamma\left(i+\frac{1}{2},\frac{\tilde{x}}{2b\rho (\alpha_2-\tilde{x}\alpha_1)}\right)}{i!2^i\Gamma\left(i+\frac{1}{2}\right)}.
\end{split}
\end{equation}
Hence, we have
\begin{equation}
\begin{split}
R_F^d&\approx -\int_{0}^{\tilde{\alpha}}\log_2\left(1+\tilde{x}\right)d\left(1-F_{\tilde{X}}(\tilde{x})\right)
=\frac{1}{\ln(2)}\int_{0}^{\tilde{\alpha}}\frac{1-F_{\tilde{X}}(\tilde{x})}{1+\tilde{x}}d\tilde{x}\\
&=\log_2(1+\tilde{\alpha})-\frac{1}{\ln(2)}e^{-\frac{\lambda}{2}}
\sum_{i=0}^{\infty}\frac{\lambda^i}{i!2^i\Gamma(i+\frac{1}{2})}
\underbrace{\int_{0}^{\tilde{\alpha}}\frac{\gamma\left(i+\frac{1}{2},\frac{\tilde{x}}{2b\rho (\alpha_2-\tilde{x}\alpha_1)}\right)}{1+\tilde{x}}d\tilde{x}}_{J_1},
\end{split}
\end{equation}
where $\tilde{\alpha}=\frac{\alpha_2}{\alpha_1}$.
To further transform $J_1$, we denote that $t=\frac{2\tilde{x}}{\tilde{\alpha}}-1$.
Then, we have
\begin{equation}
\begin{split}
J_1=\int_{-1}^{1}\frac{\gamma\left(i+\frac{1}{2},\frac{\tilde{\alpha}(1+t)}{2b\rho (2\alpha_2-\alpha_1\tilde{\alpha}(1+t))}\right)}{1+\frac{2}{\tilde{\alpha}}+t}dt.
\end{split}
\end{equation}
Next, by using the Chebyshev-Gauss quadrature, we can approximate $J_1$ as
\begin{equation}
\begin{split}
J_1\simeq \sum_{l=1}^{u_1}\omega_{1,l}\mathcal{J}_1(t_l).
\end{split}
\end{equation}
This completes the proof.

\section{Proof of Theorem \ref{Theo-Uplink-ER}}\label{Appen-Theo-Uplink-ER}
\renewcommand{\theequation}{\thesection.\arabic{equation}}
\setcounter{equation}{0}

First, the ER of N can be transformed into
\begin{equation}\label{eq-Appe-2}
\begin{split}
R_N^u&\approx -\int_{0}^{\infty}\int_{0}^{\infty}\log_2\left(1+\frac{ay}{bx+\frac{1}{\rho}}\right)d(1-F_Y(y))f_X(x)dx\\
&=\frac{1}{\ln(2)}\int_{0}^{\infty}\int_{0}^{\infty}\frac{e^{-y}}{y+\frac{b}{a}x+\frac{1}{a\rho}}dyf_X(x)dx.
\end{split}
\end{equation}
Then, by referring to \cite[\textrm{eq}. (3.352.4)]{gradshteyn2007}, \eqref{eq-Appe-2} can be rewritten as
\begin{equation}
\begin{split}
R_N^u & \approx -\frac{1}{\ln(2)}\int_{0}^{\infty}e^{\frac{b}{a}x+\frac{1}{a\rho}} \mathrm{Ei}\left(-\frac{b}{a}x-\frac{1}{a\rho}\right)f_X(x)dx\\
&=-\frac{\lambda^{\frac{1}{4}}}{2\ln(2)} e^{\frac{1}{a\rho}-\frac{\lambda}{2}}
\underbrace{\int_{0}^{\infty}x^{-\frac{1}{4}}e^{\left(\frac{b}{a}-\frac{1}{2}\right)x} \mathrm{Ei}\left(-\frac{b}{a}x-\frac{1}{a\rho}\right)
I_{-\frac{1}{2}}\big(\sqrt{\lambda x}\big)dx}_{J_3}.
\end{split}
\end{equation}
Next, by using the Gauss-Laguerre quadrature, we have
\begin{equation}
\begin{split}
J_3\simeq \sum_{l=1}^{u_3}\omega_{3,l}\mathcal{J}_3(x_{3,l}).
\end{split}
\end{equation}
For the ER of F, it can be expressed as
\begin{equation}
\begin{split}
R_F^u & \approx \int_0^{\infty} \log_2(1+b\rho x)f_X(x)dx
=\frac{\lambda^{\frac{1}{4}}}{2}e^{-\frac{\lambda}{2}}\underbrace{\int_{0}^{\infty}x^{-\frac{1}{4}}e^{-\frac{x}{2}}\log_2\left(1+b\rho x\right)I_{-\frac{1}{2}}\left(\sqrt{\lambda x}\right)dx}_{J_4}.
\end{split}
\end{equation}
Next, $J_4$ also can be approximated by adopting the Gauss-Laguerre quadrature. As such, we have
\begin{equation}
\begin{split}
J_4\simeq \sum_{l=1}^{u_4}\omega_{4,l}\mathcal{J}_4(x_{4,l}).
\end{split}
\end{equation}
This completes the proof.

\end{appendices}

\bibliographystyle{IEEEtran}
\bibliography{ref}

\end{document}